\documentclass[conference]{IEEEtran}
\IEEEoverridecommandlockouts
\usepackage{cite}
\usepackage{amsmath,amssymb,amsfonts}
\usepackage{graphicx}
\usepackage{textcomp}

\usepackage{multirow}
\usepackage{subcaption}
\usepackage{amsthm}
\usepackage{algorithm}
\usepackage[noend]{algorithmic}

\usepackage{cuted}

\theoremstyle{plain}
\newtheorem{thm}{Theorem} 
\theoremstyle{definition}
\newtheorem{defn}[thm]{Definition} 
\newtheorem{theorem}{Theorem}

\theoremstyle{definition}
\newtheorem{observation}[theorem]{Observation}

\usepackage{soul}
\usepackage{xcolor}

\usepackage{amsmath}
\usepackage{stfloats}
\usepackage{blindtext}
\usepackage{float}

\setstcolor{red}

\DeclareMathAlphabet\mathbfcal{OMS}{cmsy}{b}{n}

\def\BibTeX{{\rm B\kern-.05em{\sc i\kern-.025em b}\kern-.08em
    T\kern-.1667em\lower.7ex\hbox{E}\kern-.125emX}}
    
\begin{document}

\title{Truthful Computation Offloading Mechanisms for Edge Computing}

\author{\IEEEauthorblockN{Weibin Ma}
\IEEEauthorblockA{Department of Computer and \\Information Sciences\\
University of Delaware\\
Newark, Delaware, USA\\
Email: weibinma@udel.edu}
\and
\IEEEauthorblockN{Lena Mashayekhy}
\IEEEauthorblockA{Department of Computer and \\Information Sciences\\
University of Delaware\\
Newark, Delaware, USA\\
Email: mlena@udel.edu}
}

\maketitle

\begin{abstract}
 Edge computing (EC) is a promising paradigm
providing a distributed computing solution for users at the
edge of the network. Preserving satisfactory quality of experience (QoE)
for users when offloading their computation to EC is a non-trivial  problem. 
Computation offloading in EC requires
jointly optimizing access points (APs)  allocation and edge service placement for  users, which  is 
computationally intractable due to its combinatorial nature. 
Moreover, users are self-interested, and they can misreport  their preferences 
leading to an inefficient resource allocation and network congestion. 
In this paper, we tackle this problem and design a novel mechanism based on algorithmic mechanism design to implement a system equilibrium.
Our mechanism assigns a proper pair of AP and edge server along with a service price for each new
joining user maximizing the instant social surplus while satisfying
all users' preferences in the EC system. 
Declaring true preferences is a weakly dominant strategy for the
users. 
The experimental results show that  our mechanism  outperforms user equilibrium and
random selection strategies in terms of the experienced end-to-end latency. 
\end{abstract}
	
	\begin{IEEEkeywords}
	Edge Computing, Access Point Allocation, Service Placement, Pricing, Algorithmic Mechanism Design. 
	\end{IEEEkeywords}

\section{Introduction}\label{sec:intr}
With the explosive growth of smart devices, a bulk of computationally intensive applications, as exemplified by face recognition, online gaming, and video streaming, are becoming prevalent. 
However, the smart devices  possess limited resources (e.g., limited computation capabilities and battery lifetime), which may lead to unsatisfactory computation experience. The commonly used approach is to offload computational tasks to a powerful cloud platform~\cite{chen2014decentralized}. 
However, the long distance between  devices and the cloud will cause a significant increase in delay and network congestion.

To overcome this challenge, edge computing (EC) has recently been introduced as an emerging solution that enables offloading  computational tasks to the physically proximal EC mini-datacenters, called cloudlets~\cite{satyanarayanan2017emergence,Farhangi:ICFEC20}. 
While EC brings many opportunities to guarantee quality of experience (QoE) for users, 
new challenges arise due to the restricted coverage of cloudlets and their limited computational resources.  
To maintain the QoE of users, designing efficient realtime computation offloading  is hence becoming crucial in edge computing. 
The computation offloading problem consists of jointly optimizing access points (APs)  allocation and edge service placement for  EC users, which  is  computationally intractable due to its combinatorial nature.

In this paper, we design a novel mechanism called computation offloading and pricing mechanism (COPM) 
 to satisfy QoE of each joining user by meeting its application-specific end-to-end latency requirements. 
The goal of our proposed mechanism is to maximize the  instant social surplus, which is defined as the sum of the valuation of the new user and the system.
To tackle the complexity of COPM, we then propose an online offloading mechanism, called DAPA.
When a new user requests an edge service at any time, DAPA collects current information of the system and then 
 assigns an optimal decision pair (the best AP for connection and the best edge server for computation) to the user. 
 It also determines the user's   corresponding payment for the edge service. If  no feasible solution exists for this user, DAPA  suggests the new user offloading its task to the remote cloud. 

When a new user requests an edge service, it will report its 
maximum tolerable end-to-end latency in order to receive the best decision pair
to offload and complete its task. A user may misreport this value to increase its own utility.
Such an action  could inversely decrease the overall system efficiency. 
Therefore, designing an \textit{incentive-compatible} (or truthful) mechanism in which  users have no incentive to lie about their true preferences is extremely 
important for achieving system efficiency and implementing a system equilibrium. 
Our goal is to design an efficient incentive-compatible mechanism to determine an optimal decision pair with a corresponding payment for each user  satisfying their QoE requirements while maximizing the  instant social surplus. 
To the best of our knowledge, this is the first work that  simultaneously optimizes online AP allocation, service placement, and pricing of computation offloading  by utilizing algorithmic mechanism design.
Our proposed mechanism implements a weakly dominant strategy equilibrium for  users.

	The rest of the paper is organized as follows. Section~\ref{sec:related} reviews related work. The system model is described in Section~\ref{sec:model}. The problem formulation and COPM mechanism are presented in Section~\ref{sec:mecha_design}. In Section~\ref{sec:algorithm}, we describe  our efficient online algorithmic solution. Performance evaluation is carried out in Section~\ref{sec:results}. Section~\ref{sec:conclusions} concludes the paper.

\section{Related Work}\label{sec:related} 
In the presence of multiple  
 cloudlets, resource management becomes extremely important as it directly impacts edge service quality and system efficiency. 
Xu et al.~\cite{xu2015efficient} formulated a  capacitated cloudlet placement problem to minimize the average transmission delay between  users and cloudlets and proposed an  approximation algorithm to solve it. 
Jia et al.~\cite{jia2016cloudlet} studied the load balancing problem among multiple  cloudlets. Bhatta and Mashayekhy~\cite{bhatta2019cost} proposed a heuristic cost-aware cloudlet placement approach that 
guarantees minimum latency for edge services. 
Wang et al.~\cite{wang2017online} formulated the dynamic resource allocation problem in edge computing considering user mobility and proposed an online algorithm to solve it by decoupling the problem into a series of solvable sub-problems. 
However, none of these studies considers the selfish behavior of the users.

Game theory  has been widely used to model and analyze different allocation problems. 
	Algorithmic mechanism design  deals with efficiently-computable
algorithmic solutions in the presence of strategic players who  may misreport their input, and it has been used in distributed computing \cite{mashayekhy2015truthful,shi2014online,sharghivand2018qos}.
	Zavodovski et al.~\cite{zavodovski2019decloud}
proposed  an incentive compatible double auction mechanism, called DeCloud,  to  offer  pay-as-you-go
edge services, where ad hoc clouds can be spontaneously formed on the edge of the network. 	
Kiani and Ansari~\cite{kiani2017toward} proposed 
a revenue-maximizing auction-based mechanism for edge computing resources. However, the mechanism is not  incentive compatible. 
Ma~et al.~\cite{ma2019cyclic} modeled the resource allocation problem as a three-sided cyclic game (3CG), where  edge nodes and service providers  cooperate for completing user requests and compete for their
own interest. 3CG is proved to have pure-strategy Nash equilibria 
and an approximation ratio.

Nevertheless, none of the existing work jointly addresses the AP allocation and service placement problem along with determining service pricing  in the EC system. In this paper, we propose an  online incentive-compatible computation offloading mechanism  to address this problem. 

\section{Edge Computing System Model}\label{sec:model}
We consider an EC system with a set of cloudlets, each of which is equipped with an AP (e.g., base station or WiFi hotspots) and edge servers, to provide edge services for users (Fig.~\ref{fig:framework}). 
A regional cloudlet (or a group of cloudlets) can act as the EC coordinator 
with the responsibility of collecting system
information such as the user requests and system status.
We denote a set of cloudlets by $\mathcal{M}=\{1,2,\dots,M\}$ and  a set of users by $\mathcal{N}=\{1,2,\dots,N\}$. 
Users join and leave the system dynamically. 
Each cloudlet~$j \in \mathcal{M}$ has one or multiple edge servers with  
computation capability~$\mathbf{F}_j(\tau)$ (i.e., CPU cycles per second)
and memory capacity~$\mathbf{D}_j(\tau)$ at time~$\tau$. 
Each AP~$i \in \mathcal{M}$ can provide service to~$\mathbf{P}_i$ users simultaneously and 
has a bandwidth~$\mathbf{B}_i(\tau)$. 
The cloudlets are interconnected by a wired network (e.g., wide-area network (WAN) or local-area network (LAN)).

Each user has a computational task requiring remote execution (a user can have multiple  tasks, and each is treated independently in this system).  The task of user $k \in \mathcal{N}$ is defined by $(C_k, D_k, T_k)$, where $C_k$ represents the total amount of computational cycles required to obtain the outcome of the task, $D_k$ denotes the data size of the task, and $T_k$ is the maximum tolerable end-to-end latency, measured in time units,  for completing the  task. 
Each user can be  connected to a cloudlet via  an AP through a wireless communication (e.g., WiFi, 4G, or 5G) to offload a task.

\begin{figure}[t]
	\centering
	\includegraphics[width=0.43\textwidth]{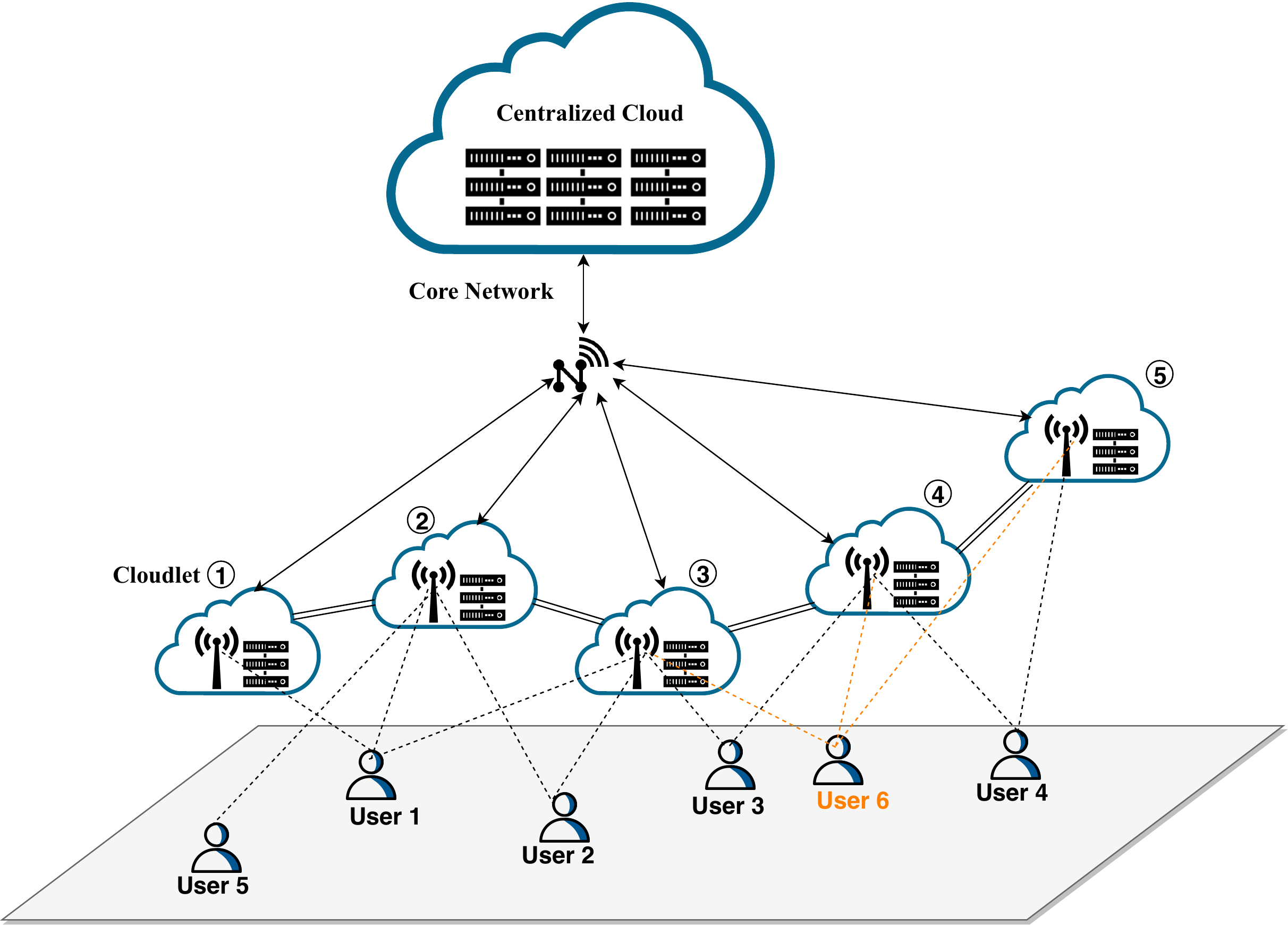}
	\caption{EC system.}
	 \label{fig:framework} 
	\vspace*{-0.4cm}
\end{figure}

A decision pair~$(i,j)$ is  made by the  coordinator for each new joining user~$k$, where $i \in \mathcal{M}$ represents the  AP to connect to and $j \in \mathcal{M}$ denotes the assigned edge server at cloudlet~$j \in~\mathcal{M}$. 
Even though a user is connected to its nearby AP, its allocated edge server can be at any  cloudlet in the EC system. 
 If assigned AP~$i$ and  edge server~$j$ of user~$k$ are not associated with each other (i.e., not in the same cloudlet), the system transfers its task from cloudlet~$i$ to cloudlet~$j$.

The system state is represented by $\mathcal{I}(\tau)=(P(\tau),Q(\tau))$ at any time instant~$\tau$, where~$P(\tau)$
and~$Q(\tau)$ represent the status of the system in terms of 
users connected to all APs and computational tasks  served by all  edge servers at~$\tau$, respectively. 
Specifically, at any time~$\tau$, they present the sets of decision variables defined as follows:
\begin{align} p^k_i(\tau) &=
\begin{cases} 
1 & \quad \text{if user $k $ is connected to AP $i$,}\\
0 & \quad \text{otherwise.}
\end{cases} \\
q^k_j(\tau) &=
\begin{cases} 
1 & \quad \text{if user~$k$ is served by cloudlet~$j$,}\\
0 & \quad \text{otherwise.}
\end{cases} 
\end{align} 
Therefore, at time~$\tau$, 
 the total number of users connected to AP~$i$ is $u_i(\tau)=\sum_{k \in \mathcal{N}}^{}p^k_i(\tau)$, 
 the total number of computational tasks of users served by cloudlet~$j$ is $v_j(\tau)=\sum_{k \in \mathcal{N}}^{}q^k_j(\tau)$, and the total number of computational tasks of users sent to cloudlet~$j$ via AP~$i$ is $x_{(i,j)}(\tau)=\sum_{k \in \mathcal{N}}^{} p^k_i(\tau)q^k_j(\tau)$.
 To make the mathematical formulation a linear convex, we can linearize $x_{(i,j)}(\tau)$. 
 We first define a binary decision variable~$y_{ij}$, and define the following set of constraints:
\begin{equation}\label{eq-y}
y_{ij}(\tau) \geq p^k_i(\tau) + q^k_j(\tau) -1,\; \forall i,j \in \mathcal{M}.
\end{equation}
 to ensure that~$y_{ij}(\tau) $ is one if both~$p^k_i(\tau)$ and~$q^k_j(\tau)$ are one; and zero otherwise.  
 We then define:
\begin{equation} \label{eq-x}
  x_{(i,j)}(\tau)=\sum_{i \in \mathcal{M}} \sum_{j \in \mathcal{M}}  y_{ij}(\tau).
\end{equation}  

When a new user joins the system, it will impact the system and  all existing users. Users using the same AP, edge server, or both as the new user may experience an additional delay. 
We model the new system state~$\hat{\mathcal{I}}$ after a decision pair~$(i^*,j^*)$ is assigned to a new joining user~$k$ at  time~$\tau$, assuming existing users in the system are following their assigned decision pairs. 
We  simply simulate the new system state~$\hat{\mathcal{I}}=(\hat{P},\hat{Q})$  considering~$\hat{p}^{k}_{i^*}(\tau)=1$ and $\hat{q}^{k}_{j^*}(\tau)=1$ while the states of other APs $i \neq i^*$ and edge servers $j \neq j^*$ remain unchanged.
In addition,  $\hat{u}_{i^*}(\tau)=u_{i^*}(\tau-1)+1$ and $\hat{v}_{j^*}(\tau)=v_{j^*}(\tau-1)+1$. 

When a user leaves the system, 
the coordinator updates the system state by releasing the communication and computing resources allocated to  that user. Specifically, considering an  assigned decision pair~$(i^*, j^*)$, we have $\hat{u}_{i^*}(\tau)=u_{i^*}(\tau-1)-1$, $\hat{v}_{j^*}(\tau)=v_{j^*}(\tau-1)-1$, and other related parameters will be updated, accordingly. 

We next describe the AP  allocation model, the service placement model, and the end-to-end latency model in detail.

\subsection{Access Point Allocation Model}
 As mentioned, each cloudlet is associated with an AP. 
 It is possible that a user is within a range of multiple APs and can access any of them, but the user will be connected to only one AP for each of its tasks.  
 We define an indicator variable~$\delta_{ki}$ that characterizes the availability of AP~$i \in \mathcal{M}$ to user~$k \in \mathcal{N}$ at  time~$\tau$ as follows: 
 \begin{equation*}\label{AP_avail}
		\delta_{ki}(\tau) = \left\{\begin{array}{l}1 \;\;\;\;\; \text{if AP $i$ is available to user $k$ at $\tau$,}\\ 0 \;\;\;\;\; \text{otherwise}. 
		\end{array}\right.
 \end{equation*}
 This indicates whether  user~$k$ can connect to AP~$i$ at time~$\tau$ or not. Therefore, we have the following constraint for AP selection:
 \begin{equation}\label{AP_selection}
	 \sum_{i \in \mathcal{M}}^{} \delta_{ki}(\tau)p^k_{i}(\tau)=1, \forall k \in \mathcal{N}, 
 \end{equation}
 which implies that each user can only connect to one available AP at time~$\tau$.

 If too many users choose to connect to the same AP simultaneously, they may incur severe interference, which  eventually leads to lower uplink data rate. This would negatively affect the performance of computation offloading in the EC system. 
 Therefore, the system needs to guarantee  the following:
 \begin{equation}\label{p_cons}
u_i(\tau) \leq \mathbf{P}_i,\; \forall i \in \mathcal{M}.
\end{equation}

\subsection{Service Placement Model}
When a  user requests an edge service, the EC coordinator needs to decide where to properly place  computational resources (e.g., VM or Container) to serve this user. Specifically, the requested resources can be hosted on any cloudlet~$j \in \mathcal{M}$ that satisfies the  QoE requirements of the user and improves the efficiency of the EC system.  
If tasks can only be executed on
 edge servers  associated with their connected APs, as
the number of arriving users increases, the EC system will be
 overloaded quickly leading to unsatisfactory performance. Therefore, we seek to find a proper service placement for each user's task.

Each user's task is served by only one cloudlet, thus we  have:
\begin{equation}\label{VM_place2}
\sum_{j \in \mathcal{M}}^{} q^k_j(\tau)=1, \forall k \in \mathcal{N}.
\end{equation}
In addition, the assignment of tasks to edge servers of each cloudlet should not exceed its capacity: 
\begin{equation}\label{capa_size}
	\sum_{k \in \mathcal{N}}^{} q^k_j(\tau)D_k \leq \mathbf{D}_j(\tau), \forall j \in \mathcal{M}.
\end{equation}
Moreover, we need to ensure that the total number of users connecting to APs is exactly equal to the total number of users served by the cloudlets all the time. Therefore, we have:
\begin{equation}\label{in=out}
\sum_{i \in \mathcal{M}}^{} {u}_i(\tau) = 	\sum_{j \in \mathcal{M}}^{} {v}_j(\tau).
\end{equation}

\subsection{End-to-End Latency Model}\label{subsec:delay}
End-to-end latency includes the network delay of transmitting the data to a cloudlet (communication delay), the processing time at the cloudlet (computation delay),   and finally the network transport delay of transmitting the results to the user's device (communication delay).

\vspace*{0.2cm}
\noindent\emph{C.1) Communication Delay.}
 The communication delay consists of the transmission delay of the user connecting to a proper AP and the transferring delay of the AP relaying to a proper edge server if the connected AP and edge server are not associated with each other. 

\vspace*{0.2cm}
\noindent\emph{Transmission Delay.}
Transmission delay is determined by the wireless communication conditions (e.g., the number of users connected to same AP).
Assuming the bandwidth of an AP is equally allocated to all users connecting to it, 
the bandwidth  allocated to  user~$k$ at time~$\tau$ from AP~$i$ is $r_{ki}(\tau)=\mathbf{B}_i/u_i(\tau)$.
Therefore, the uplink transmission delay of offloading task~$k$ to AP~$i$ at time~$\tau$ is calculated  as:
\begin{equation}\label{delay_ac}
\Lambda^{t}_{k,i}(\tau) =  \frac{D_k}{r_{ki}(\tau)}=\frac{D_k u_i(\tau)}{\mathbf{B}_i}.
\end{equation}

\vspace*{0.2cm}
\noindent \emph{Transferring Delay.}
When the connected AP~$i$ is not associated with the assigned edge server~$j$, i.e., $i \neq j$, we consider a transferring delay~$\Lambda^{f}_{i,j}(\tau)$ as a function of hop distance between the cloudlet of the connected AP and the desired cloudlet. This is due to the fact that    the cloudlets are interconnected via LAN and their physical distance is small. 
Obviously, if $i=j$, there is no transferring delay, i.e., $\Lambda^{f}_{i,j}(\tau)=0$.

Similar to many studies (e.g.,~\cite{chen2014decentralized,ma2015game}), 
we neglect the  delay from the edge server to send the computational results back to the user when the connected AP is associated with the assigned edge server (i.e., no transferring delay). Otherwise, we consider the transferring delay as the total backhaul delay. This is because that the size of computation outcome for many applications or computational tasks (e.g., image recognition)  is usually much smaller than the size of input data.

\vspace*{0.2cm}
\noindent\emph{C.2) Computation Delay}.
We consider  the computational capabilities of a cloudlet are fairly divided among its assigned tasks. 
The computation delay of the task of user~$k$ executed on an edge server of cloudlet~$j$ at time~$\tau$ is calculated using:
\begin{equation}\label{delay_cp}
\Lambda^{c}_{k,j}(\tau) = \frac{C_k v_j(\tau)}{\mathbf{F}_j}, \forall j \in \mathcal{M}.
\end{equation}

\vspace*{0.2cm}
\noindent\emph{C.3) Total Delay}. 
The  end-to-end latency (total delay) experienced by user~$k$ with   an assigned decision pair~$(i,j)$ at time~$\tau$ is as follows:
\begin{equation}\label{delay_sys}
 \Lambda^{l}_{k,(i,j)}(\tau) = \Lambda^{t}_{k,i}(\tau)+2\Lambda^{f}_{i,j}+\Lambda^{c}_{k,j}(\tau)
\end{equation}

Furthermore, the system needs to ensure that the total delay experienced by user~$k$ does not exceed its maximum tolerable end-to-end latency, that is: 
\begin{equation}\label{delay_cons}
{\Lambda}^l_{k,(i,j)}(\tau) \leq T_k, \forall k \in \mathcal{N}.
\end{equation}

\section{Mechanism Design-based Offloading}\label{sec:mecha_design}
Users can be modeled as selfish players that can game the system 
 leading to network congestion, imbalance load, and inefficient resource allocation. 
Algorithmic mechanism design provides a suitable 
approach to incentivize players  to cooperate with the system in order to reach desirable
outcomes. 
 The goal of algorithmic mechanism design is to design a system for  such self-interested players, such that their strategies   at equilibrium lead to expected system performance. 
In this section, we propose a
computation offloading and pricing mechanism (COPM) to solve the dynamic
computation offloading problem  in edge computing
based on  algorithmic mechanism design.

\subsection{Utility Functions}\label{pricing}
\noindent\emph{A.1) User-Centric Model}. 
A  user $k \in \mathcal{N}$ sends its offloading request in the form of $(C_k,D_k,T_k)$ at  time~$\tau$ to the EC system. 
The valuation of user~$k$ for a decision pair~$(i,j)$ considering~$\Lambda^{l}_{k,(i,j)}(\tau)$ (experienced latency) and~$T_k$
is defined as:
\begin{equation}\label{value_user}
V_{k,T_k}^{(i,j)}(\tau)=\psi_k(T_k-\Lambda^{l}_{k,(i,j)}(\tau)), 
\end{equation}
where~$\psi_k$ is a constant value representing user~$k$'s monetary preference per unit of time for its QoE.

The \textit{utility of  user~$k$} when it follows assigned decision pair~$(i,j)$ at time~$\tau$ is determined by:
\begin{equation}\label{utility_user}
U_{k,T_k}^{(i,j)}(\tau) = \underbrace{V_{k,T_k}^{(i,j)}(\tau)}_{\text{\small valuation}}
-   \underbrace{w_{k}^{(i,j)}(\tau)}_{\text{\small payment}}, 
\end{equation}
where~$w_{k}^{(i,j)}(\tau)$ is the  payment of the user  for completing its task through the assigned decision pair. We assume that users are risk-neutral and want to maximize their utilities.

\vspace*{0.2cm}
\noindent\emph{A.2) System-Centric Model}.
The EC system aims to maximize the social surplus of all current users (excluding new users) 
while satisfying the QoE of each user.

When new user~$k$ joins  the system with an assigned decision pair~$(i^*,j^*)$ at time~$\tau$, we define the \textit{valuation of the system} as follows: 
\begin{equation}
\begin{aligned}
V_{s,(i^*,j^*)}(\tau) = &  \sum_{n \in \mathcal{N} \setminus k} \sum_{i \in \mathcal{M}} \sum_{j \in \mathcal{M}} A \cdot  \Lambda^{l}_{n,(i,j)}(\tau), 
\end{aligned}
\end{equation}
where~$A$ is a vector of parameters~$\alpha, \beta, \gamma$ representing the monetary preferences per unit of time in transmission,  transferring, and computation parts of the offloading, respectively.
In particular, $\alpha_{i}$ is the monetary value of time for AP~$i \in \mathcal{M}$;  
$\beta_{(i,j)}$ is the monetary value of time for transferring a task to  edge server~$j$ via assigned AP~$i$; and
$\gamma_j$ is the monetary value of time for  edge server~$j \in \mathcal{M}$. 
 Therefore, the \textit{valuation of the system} is calculated as:
\begin{equation}\label{value_system}
\begin{aligned}
V_{s,(i^*,j^*)}(\tau) = 
\Big( &\sum_{i \in \mathcal{M}}^{}\sum_{n \in \mathcal{N} \setminus k}^{} \alpha_{i}\hat{\Lambda}^{t}_{n,i}(\tau)+ \\
& \sum_{i \in \mathcal{M}}^{}\sum_{j \in \mathcal{M}}^{}\beta_{(i,j)}  x_{(i,j)}(\tau) 2 \hat{\Lambda}^f_{i,j} + \\
& \sum_{j \in \mathcal{M}}^{}\sum_{n \in \mathcal{N} \setminus k} \gamma_j \hat{\Lambda}^{c}_{n,j}(\tau)
\Big),
\end{aligned}
\end{equation}
where $\hat{\Lambda}^{t}_{n,i}(\tau)$ and $\hat{\Lambda}^{c}_{n,j}(\tau)$ are the new transmission delay and the new computation delay of the current users ($\forall n \in \mathcal{N} \setminus k$) after user~$k$ joins at time~$\tau$, respectively. 
In particular, for all AP $i \neq i^*$  and edge server~$j \neq j^*$, we have $\hat{\Lambda}^{t}_{n,i}(\tau)=\Lambda^{t}_{n,i}(\tau-1)$ and  $\hat{\Lambda}^{c}_{n,j}(\tau)=\Lambda^{c}_{n,j}(\tau-1)$ for all users.  
For AP~$i^*$ and edge server~$j^*$, the value of $\hat{\Lambda}^{t}_{n,i^*}(\tau)$ and $\hat{\Lambda}^{c}_{n,j^*}(\tau)$ will be calculated according to new $\hat{u}_{i^*}({\tau})= u_{i^*}({\tau-1})+1$ and new $\hat{v}_{j^*}({\tau})= v_{j^*}({\tau-1})+1$, respectively.  
Note that the transferring delay between any two cloudlets will not be affected by the new joining user (i.e., $\hat{\Lambda}^f_{i,j} = {\Lambda}^f_{i,j}$), 
since it  depends on their number of hop distances.

The \emph{utility of the system} is defined as:
\begin{equation}\label{utility_system}
U_{s,(i^*,j^*)}(\tau) = \underbrace{w_s(\tau)}_{\text{\small payment}} - \underbrace{V_{s,(i^*,j^*)}(\tau)}_{\text{\small valuation}}
\end{equation}
Note that the EC system is better off as the total delay of all
current users decreases. 
Moreover, the mechanism is budget balanced, where the exchanged  payments are equal. Meaning that:
$w_s(\tau)=w_{k}^{(i,j)}(\tau).$

	\begin{figure*}[b]
			\vspace*{-0.4cm}
		\begin{equation}\label{obj2}
		\begin{aligned}	
		&-V_{s,(i,j)}(\tau)+V_{k,T_k}^{(i,j)}(\tau) \\
	&
		=  - 
		\underbrace{
			\Big\{   
			\sum_{i \in \mathcal{M}}^{}\sum_{n \in \mathcal{N}  \setminus k} \alpha_{i}\Lambda^{t}_{n,i}(\tau-1)
			+ \sum_{i \in \mathcal{M}}^{}\sum_{j \in \mathcal{M}}^{}2\beta_{ij} x_{ij}(\tau-1)\Lambda^f_{(i,j)}  
			+ 
			\sum_{j \in \mathcal{M}}^{}\sum_{n \in \mathcal{N}  \setminus k}\gamma_j \Lambda^{c}_{n,j}(\tau-1)
			\Big\}}_{\text{\small Term 1}}  	+ 
		\underbrace{\psi_k T_k}_{\text{\small Term 2}}  \\&\;\;\;\;\;  -
		\Bigg\{ 
	\Big\{
			\underbrace{
			\alpha_{i^*}\sum_{n \in \mathcal{N}  \setminus k}(\hat{\Lambda}^t_{n,i^*}(\tau)-\Lambda^t_{n,i^*}(\tau-1))   +
			2\beta_{i^*j^*}x_{i^*j^*}(\tau-1) 	
			(
			\underbrace{ 
				\hat{\Lambda}^f_{(i^*,j^*)} 	 - \Lambda^f_{(i^*,j^*)}}_{\text{\small =0}} 
			)	  +
			\gamma_{j^*} \sum_{n \in \mathcal{N}  \setminus k}(  \hat{\Lambda}^c_{n,j^*}(\tau) - \Lambda^c_{n,j^*}(\tau-1) )
			\Big\}}_{\text{\small the value of increase in  total delay of the current users (excluding the new user) when new user $k$ joins}}
			\\&\;\;\;\;\;   +
		\underbrace{
			\Big\{   \psi_k \big(\hat{\Lambda}^{t}_{k,i^*}(\tau)+  2\hat{\Lambda}^f_{(i^*,j^*)} + \hat{\Lambda}^c_{k,j^*}(\tau)  \big) \Big\}}_{\text{\small the value of total delay of new user $k$}}
		\Bigg\}  \\&
		\end{aligned} 
		\end{equation}
		\vspace*{-0.8cm}
	\end{figure*}

\subsection{COPM: Computation Offloading and Pricing Mechanism} 
The state of the EC system changes dynamically over time by the arrival and departure of  users. 
At any time  a new user requests to join, we define a \emph{game} between the new  user as a player and all current users in the EC system. 
The objective is to maximize the utilities of both the new user and the EC system. 
We propose the \textit{instant social surplus} to handle the dynamic changes of the EC system: 
\begin{defn}[\textbf{Instant social surplus}]\label{def-surplus}
	\textit{
	The instant social surplus at any time~$\tau$ is the sum of the utility of new user 
	and the utility of the system:} 
\begin{equation}\label{final_obj_system}
U_{s,(i^*,j^*)}(\tau) + U_{k,T_k}^{(i,j)}(\tau) = V_{k,T_k}^{(i,j)}(\tau) - V_{s,(i^*,j^*)}(\tau). 
\end{equation}
\end{defn}

The EC system aims to assign user~$k$ a proper decision pair~$(i,j)$ 
 maximizing the instant social surplus while guaranteeing  this user's QoE and the system's capacity constraints.

Users can choose to lie about their true preferences (i.e., maximum tolerable end-to-end latency) in order to increase their own utility. Such an action  could inversely decrease the overall system efficiency. Therefore, designing an \textit{incentive-compatible} mechanism in which  users have no incentive to lie about their true preferences is extremely crucial in reality. In an incentive-compatible mechanism, truth-telling is a dominant strategy. As a result, it never pays off for any user  to deviate from reporting its true preference,
irrespective of what the other users 
report as their preferences.

We  propose an optimal  incentive-compatible offloading mechanism, COPM, that consists of a decision pair allocation scheme and a payment determination scheme. 
To achieve  incentive compatibility, we need to design an optimal decision pair allocation scheme (subsection B.1) along with a payment function (subsection B.2) designed based on Vickrey-Clarke-Groves (VCG) pricing~\cite{nisan2007algorithmic}. 
We  describe our offloading mechanism design in detail in the following.

\vspace*{0.2cm}
\noindent\emph{B.1) Decision Pair Allocation Scheme}. 
The goal of the EC system is to allocate an optimal decision pair to each new joining user in order to maximize the instant social surplus while satisfying the user's preference. 
We define the Maximization of Instant Social Surplus problem, called  MISS, as follows: 
\begin{equation}\label{obj1}
\max_{(i,j) \in \mathcal{P}} -V_{s,(i,j)}(\tau)+V_{k,T_k}^{(i,j)}(\tau)
\end{equation}
\begin{align*}
s.t.\;\;\;\; (\ref{eq-y})(\ref{eq-x})(\ref{AP_selection})(\ref{p_cons})(\ref{VM_place2})(\ref{capa_size})(\ref{in=out})(\ref{delay_cons}) {\text{ and}}\\
{\text{integrality for the decision variables}},
\end{align*}
where $\mathcal{P}$ represents the set of all feasible decision pairs for user $k$.

Since the  MISS optimization problem is  complicated, we  use  the factorization techniques to obtain a simplified version of MISS, called MISS$^2$. 
\begin{observation} 
	\textit{
	The MISS problem is equivalent to finding a decision pair that minimizes the sum of increase in the total delay of all current users (not including the new user)  and the total delay of the new user itself. }
\end{observation}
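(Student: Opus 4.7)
The plan is to expand the MISS objective $-V_{s,(i,j)}(\tau)+V_{k,T_k}^{(i,j)}(\tau)$ and peel off all terms that do not depend on the decision pair chosen for the new user $k$, so that the remaining decision-dependent part reveals the equivalence. The observation is essentially a rewriting statement: the maximization of instant social surplus is a constant plus the negative of exactly two delay-sensitive quantities, namely the increase in the weighted total delay of the current users and the weighted total delay of the new user.

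First I would substitute $V_{k,T_k}^{(i,j)}(\tau)=\psi_k\bigl(T_k-\Lambda^l_{k,(i,j)}(\tau)\bigr)$ from (\ref{value_user}) and note that $\psi_k T_k$ is independent of $(i,j)$ and can be discarded from the maximization. Next I would expand $V_{s,(i,j)}(\tau)$ using (\ref{value_system}) and, for each of its three sums (transmission, transferring, computation), split $\hat{\Lambda}^{t}_{n,i}(\tau)$, $\hat{\Lambda}^{f}_{i,j}$, and $\hat{\Lambda}^{c}_{n,j}(\tau)$ into the pre-arrival values plus the increment induced by $k$'s arrival. Using the model's facts that $\hat{\Lambda}^{t}_{n,i}(\tau)=\Lambda^{t}_{n,i}(\tau-1)$ for all $i\neq i^{*}$, $\hat{\Lambda}^{c}_{n,j}(\tau)=\Lambda^{c}_{n,j}(\tau-1)$ for all $j\neq j^{*}$, and $\hat{\Lambda}^{f}_{i,j}=\Lambda^{f}_{i,j}$ for every pair of cloudlets, all of the pre-arrival pieces collapse into Term 1 of (\ref{obj2}), which is the weighted total delay of the system before $k$ arrives and thus is also independent of the chosen $(i,j)$. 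The only surviving decision-dependent parts are the increments at AP $i^{*}$ and cloudlet $j^{*}$, i.e.\ $\alpha_{i^{*}}\sum_{n\neq k}(\hat{\Lambda}^{t}_{n,i^{*}}(\tau)-\Lambda^{t}_{n,i^{*}}(\tau-1))$ and $\gamma_{j^{*}}\sum_{n\neq k}(\hat{\Lambda}^{c}_{n,j^{*}}(\tau)-\Lambda^{c}_{n,j^{*}}(\tau-1))$, together with the new user's own weighted delay $\psi_k\,\Lambda^{l}_{k,(i^{*},j^{*})}(\tau)$. Recognising the first two as precisely the increase in the weighted total delay of the current users (the transferring-delay difference is zero by the hop-distance argument), I would conclude that maximizing the instant social surplus is equivalent, up to additive constants, to minimizing the sum of the increase in the current users' total delay and the total delay of the new user.

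The main obstacle is not algebraic difficulty but careful bookkeeping: I must verify once and for all that the three delay components factor cleanly so that no cross-term involving an unaffected AP or cloudlet sneaks into the decision-dependent part. This reduces to invoking the dynamics already stated after (\ref{value_system}) (only $\hat{u}_{i^{*}}$ and $\hat{v}_{j^{*}}$ change, and $\hat{\Lambda}^{f}_{i,j}={\Lambda}^{f}_{i,j}$ since it depends only on hop distance) uniformly across the three summations. Once this is done, equation (\ref{obj2}) is simply a restatement of the derivation, and the observation follows immediately by negating and swapping maximization for minimization.
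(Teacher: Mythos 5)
Your proposal is correct and follows essentially the same route as the paper's own derivation in the appendix (Eq.~(\ref{obj2--})): expand $-V_{s,(i,j)}(\tau)+V_{k,T_k}^{(i,j)}(\tau)$, add and subtract the pre-arrival delays at $i^*$ and $j^*$ so that all pre-arrival pieces reassemble into the constant Term~1, discard the constant $\psi_k T_k$, use $\hat{\Lambda}^f_{(i,j)}=\Lambda^f_{(i,j)}$ to kill the transferring increment, and read off the remaining decision-dependent part as the weighted increase in the current users' delay plus the new user's weighted total delay, turning the maximization into the minimization MISS$^2$ of Eq.~(\ref{obj3}). No gaps worth noting.
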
 

	According to Eq.~(\ref{value_user}) and~(\ref{value_system}), the objective function of MISS in Eq.~(\ref{obj1}) can be rewritten as Eq.~(\ref{obj2}) (See Appendix for the detailed proof). 

Since the EC system knows the state of the  system, 
it is easy to find that in Eq.~(\ref{obj2}), Term~1 representing the utility of the 
system before new user~$k$ joins (i.e.,~$V_{s,(i,j)}(\tau)$) is constant. Term~2 is also constant as~$\psi_k$ and~$T_k$ do not depend on the decision pair. 
	  Since the values of $\alpha_i$, $\gamma_j$, and $\psi_k$ are predefined constants, 
	  the MISS problem has an equivalent minimization problem defined as MISS$^2$ as follows: 
	\begin{equation}\label{obj3}
	\begin{aligned}
	\min_{(i,j) \in \mathcal{P}} 
	\Big\{&
	\alpha_{i}\underbrace{\sum_{n \in \mathcal{N} \setminus k} (\hat{\Lambda}^{t}_{n,i}(\tau) - \Lambda^{t}_{n,i}(\tau-1)  )}_{\text{\small  increase in  transmission delay}}   + \\
	& \gamma_{j} \underbrace{\sum_{n \in \mathcal{N} \setminus k}(\hat{\Lambda}^c_{n,j}(\tau) - \Lambda^c_{n,j}(\tau-1) )}_{\text{\small  increase in computation delay}}
	+ \\
	& \psi_k \underbrace{(\hat{\Lambda}^{t}_{k,i}(\tau)+  2\hat{\Lambda}^f_{(i,j)} + \hat{\Lambda}^c_{k,j}(\tau))}_{\text{\small   total delay of new user}} 
	\Big\}
	\end{aligned}
	\end{equation} \vspace*{-0.4cm}
	\begin{align*}
	s.t.\;\;\;\; (\ref{eq-y})(\ref{eq-x})(\ref{AP_selection})(\ref{p_cons})(\ref{VM_place2})(\ref{capa_size})(\ref{in=out})(\ref{delay_cons}) {\text{ and}}\\
{\text{integrality for the decision variables}}.
	\end{align*} 
	Therefore, the objective 
	now becomes to find an optimal decision pair~$(i^*,j^*)$ for 
	user~$k$ such that the sum of  the increase in   total delay of all current users (not including   user~$k$)  after   user~$k$ joins (first two terms of Eq.~(\ref{obj3}) in MISS$^2$) and the  total delay of   user~$k$ is minimized (third term of MISS$^2$). 
After COPM calculates a proper decision pair for each new user by solving MISS$^2$, 
it calculates a corresponding payment for each user.

\vspace*{0.2cm}
\noindent\emph{B.2) Payment Determination Scheme}. 
After solving the MISS$^2$ problem, an optimal decision pair~$(i^*,j^*)$ is calculated by the EC coordinator for  each new joined user~$k$. The coordinator then needs to compute their payments (e.g., each user~$k$ should pay for using its assigned AP~$i^*$ and edge server~$j^*$). We define the payment based on the \textit{marginal cost pricing} as follows:
\begin{equation}\label{pricing2}
\begin{aligned}
w_{k}^{(i^*,j^*)}&(\tau)=V_{s,(i^*,j^*)}(\tau) - V_{s,(i,j)}(\tau-1)   = \\
   \big\{ &\alpha_{i^*}\sum_{n \in \mathcal{N} \setminus k} \big( 	\hat{\Lambda}^{t}_{n,i^*}(\tau) - \Lambda^{t}_{n,i^*}(\tau-1)  \big) + \\
& \gamma_{j^*} \sum_{n \in \mathcal{N} \setminus k} \big(  \hat{\Lambda}^c_{n,j^*}(\tau) - \Lambda^c_{n,j^*}(\tau-1) \big)
\big\},
\end{aligned}
\end{equation}
where~$V_{s,(i,j)}(\tau-1)$ represents the valuation of the EC system right before user~$k$ joins, and~$V_{s,(i^*,j^*)}(\tau)$ denotes the valuation of the EC system of all current users (not including the  joined  user~$k$), 
calculated according to Eq.~(\ref{value_system}). 
Considering the first two terms of Eq.~(\ref{obj3}), the  payment of each user~$k$ is exactly equal to the increase in  valuations of other current users in the EC system. 

Our proposed mechanism, COPM, is incentive compatible.
To prove this, we firstly introduce the definition of weakly dominant strategy in our mechanism in the following.

\begin{defn}[\textbf{Weakly dominant strategy}]\label{def-dominant}
	\textit{A declared maximum tolerable end-to-end latency of each new joined user is a weakly dominant strategy if and only if 
		it provides at least the same utility  for all the other latency values of this user,   regardless of what  other users in EC do.
	}
\end{defn}
Any mechanism is  incentive compatible if it is a weakly-dominant strategy for users to reveal their private information (declare true latency). 

We now prove the incentive compatibility of our COPM mechanism whenever a new user joins  the EC system. 
\begin{theorem}
	\textit{Given an assigned decision pair~$(i^*,j^*)$ by the decision pair allocation scheme and an assigned  payment~$w_{k}^{(i^*,j^*)}(\tau)$ by the payment determination scheme, declaring the true maximum tolerable end-to-end latency~$T_k$ is a weakly dominant strategy of a new user~$k$ in our COPM mechanism.}
\end{theorem}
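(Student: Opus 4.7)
The plan is to exploit the VCG-like structure of COPM. Substituting the marginal-cost payment from Eq.~(\ref{pricing2}) into the user's utility Eq.~(\ref{utility_user}) and regrouping terms gives
\[
U_{k,T_k}^{(i,j)}(\tau) \;=\; \psi_k T_k \;-\; \bigl[\psi_k\Lambda^l_{k,(i,j)}(\tau) + w_k^{(i,j)}(\tau)\bigr],
\]
in which $\psi_k T_k$ is a constant completely outside the user's influence on the allocation, and the bracketed ``cost'' is, term by term, exactly the MISS$^2$ objective of Eq.~(\ref{obj3}). Hence, maximizing the user's own utility is equivalent to having the mechanism minimize the MISS$^2$ objective over the largest feasible region that is consistent with the user's true deadline $T_k$ — which is precisely what the decision pair allocation scheme does under truthful reporting.

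The only influence user~$k$ has through its reported deadline $T_k'$ is via the feasibility constraint Eq.~(\ref{delay_cons}); the MISS$^2$ objective itself does not depend on $T_k'$. Let $\mathcal{F}(T)$ denote the feasible region defined by all system constraints together with $\Lambda^l_{k,(i,j)}(\tau) \leq T$, and let $(i^*(T),j^*(T))$ be the resulting optimal pair. I will then split into two cases. If the user under-reports ($T_k' < T_k$), then $\mathcal{F}(T_k') \subseteq \mathcal{F}(T_k)$, so the minimum of the MISS$^2$ objective over the smaller set is weakly larger, yielding a weakly lower utility than truth-telling; moreover the chosen pair still respects $\Lambda^l \leq T_k' < T_k$, so the valuation formula applies without issue. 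If the user over-reports ($T_k' > T_k$), two sub-cases arise. In the benign sub-case, the optimal pair on the larger set $\mathcal{F}(T_k')$ still satisfies $\Lambda^l \leq T_k$; then it also lies in $\mathcal{F}(T_k)$, and the optimality of $(i^*(T_k))$ on $\mathcal{F}(T_k)$ forces the two MISS$^2$ values to coincide. In the malignant sub-case, the pair returned has $\Lambda^l > T_k$, violating the user's true QoE requirement, so — treating $T_k$ as a hard tolerance — this outcome is strictly worse than the always-feasible truth-telling outcome.

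Combining both cases yields $U_{k,T_k}^{(i^*(T_k),j^*(T_k))}(\tau) \geq U_{k,T_k}^{(i^*(T_k'),j^*(T_k'))}(\tau)$ for every alternative report $T_k'$. Because the system state $\mathcal{I}(\tau-1)$ that enters both the payment formula and the MISS$^2$ objective is fixed at the moment user~$k$ joins, the inequality holds irrespective of what the other users have reported, satisfying Definition~\ref{def-dominant}.

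The main obstacle will be handling the over-reporting case cleanly. A purely algebraic reading of $V_{k,T_k}^{(i,j)}(\tau) = \psi_k(T_k - \Lambda^l)$ permits arbitrarily negative values when $\Lambda^l > T_k$, which on its face could be balanced out by a smaller payment and hand the over-reporter a net gain. The honest resolution is to exploit that $T_k$ is defined as the \emph{maximum tolerable} latency and so functions as a hard QoE constraint: any allocation violating it is effectively a service failure, so the valuation in that regime cannot be traded off against lower payment. This is the conceptual pivot that closes the case analysis, and is where the write-up must be most careful to stay faithful to the paper's model.
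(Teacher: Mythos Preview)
Your argument is correct and rests on the same VCG-style observation the paper uses: substituting the marginal-cost payment into the user's utility shows that user~$k$'s utility equals the constant $\psi_k T_k$ minus precisely the MISS$^2$ objective in Eq.~(\ref{obj3}), so the allocation that minimizes MISS$^2$ over the true feasible set simultaneously maximizes user~$k$'s utility. The paper's write-up, however, packages this differently. It argues by contradiction: assume some misreport $T_k'$ yields strictly higher utility via a pair $(i',j')$, expand both utilities using Eq.~(\ref{utility_user}) and~(\ref{pricing2}), cancel the common term $V_{s,(i,j)}(\tau-1)$, and obtain $V_{k,T_k}^{(i',j')}(\tau)-V_{s,(i',j')}(\tau) > V_{k,T_k}^{(i^*,j^*)}(\tau)-V_{s,(i^*,j^*)}(\tau)$, which contradicts the optimality of $(i^*,j^*)$ for MISS in Eq.~(\ref{obj1}). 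Crucially, the paper does not split into under- and over-reporting cases; instead it simply imposes the side condition $\hat{\Lambda}^l_{k,(i',j')}(\tau)\le T_k$ on the assumed counterexample, so that $(i',j')$ is automatically feasible for the true MISS problem and the contradiction goes through. What you do differently---the explicit monotone-feasible-region argument for under-reporting and the ``service failure'' treatment for over-reporting into infeasibility---is more transparent about where the over-reporting case actually closes, and your candid remark about the algebraic loophole (negative valuation potentially offset by lower payment) identifies exactly the step the paper elides by fiat. Both routes are sound; yours is longer but more self-contained, while the paper's is terser at the cost of leaving the infeasible-outcome sub-case implicit in its feasibility constraint.
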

\begin{proof}
	It is clear that  declaring a latency~$T_k^{\prime}$ different from the true~$T_k$ may change the optimal decision pair of user~$k$. 
	Every new joined user would like to receive a decision pair that gives  the maximum utility value, and it may choose to misreport to increase its utility.

	
	We claim that user~$k$ maximizes its own utility by declaring its true maximum tolerable end-to-end latency~$T_k$, i.e.,~$U_{k,T_k}^{(i^*,j^*)}(\tau) \geq U_{k,T_k^{\prime}}^{(i^\prime,j^\prime)}(\tau)$, where $(i^\prime, j^\prime)$ is the new decision pair corresponding to any other declared~$T_k^{\prime}$ different from true~$T_k$. The proof is by contradiction.
	We assume that user $k$ maximizes its own utility by declaring $T_k^{\prime}\neq T_k$, which means:
	\begin{equation}\label{proof2}
		U_{k,T_k^{\prime}}^{(i^\prime,j^\prime)}(\tau) > U_{k,T_k}^{(i^*,j^*)}(\tau)
	\end{equation}	
	\begin{align*}
		&s.t.\;\;\;\;
		\hat{\Lambda}^s_{k,(i^\prime,j^\prime)}(\tau) \leq T_k, 
	\end{align*}
	where the constraint shows that the new decision pair~$(i^\prime,j^\prime)$ calculated by declared~$T_k^{\prime}$ should be feasible to  user~$k$, which implies that the new decision pair~$(i^\prime,j^\prime)$ is also a feasible solution to the MISS problem (\ref{obj1}).
	
	Based on 	Eq.~(\ref{utility_user}) and~(\ref{pricing2}), we have:
	\begin{align}
		U_{k,T_k}^{(i^*,j^*)}(\tau) = {V_{k,T_k}^{(i^*,j^*)}(\tau)}-   {w_{k}^{(i^*,j^*)}(\tau)} =\\
		V_{k,T_k}^{(i^*,j^*)}(\tau) - (V_{s,(i^*,j^*)}(\tau) - V_{s,(i,j)}(\tau-1)) \nonumber
	\end{align}
	and similarly for~$U_{k,T_k^{\prime}}^{(i^\prime,j^\prime)}(\tau)$.  
	Therefore, we modify  inequality~(\ref{proof2}) as:
	\begin{align} \nonumber
		V_{k,T_k}^{(i^\prime,j^\prime)}&(\tau) - V_{s,(i^\prime,j^\prime)}(\tau) + V_{s,(i,j)}(\tau-1) 
		> \\
		&V_{k,T_k}^{(i^*,j^*)}(\tau) - V_{s,(i^*,j^*)}(\tau) + V_{s,(i,j)}(\tau-1). \nonumber
	\end{align}
	Since user~$k$ has no control over the term $V_{s,(i,j)}(\tau-1)$ (valuation of the EC system right before user~$k$ joins), we  subtract it from both sides of the inequality and hence get:
	\begin{align}\label{proof1}
		V_{k,T_k}^{(i^\prime,j^\prime)}&(\tau) - V_{s,(i^\prime,j^\prime)}(\tau) > \\
		& V_{k,T_k}^{(i^*,j^*)}(\tau) - V_{s,(i^*,j^*)}(\tau). \nonumber
	\end{align}
	
	In contrast, we know that $(i^*,j^*)$ calculated by the decision pair allocation scheme maximizes the MISS problem~(\ref{obj1}) for this user's true maximum tolerable end-to-end latency~$T_k$ while  $(i^\prime,j^\prime)$ is a feasible solution of the MISS problem~(\ref{obj1}). Thus, we have:
	\begin{align}\label{proof0}
		- V_{s,(i^*,j^*)}&(\tau)+V_{k,T_k}^{(i^*,j^*)}(\tau)  \geq \\
		&- V_{s,(i^\prime,j^\prime)}(\tau)+V_{k,T_k}^{(i^\prime,j^\prime)}(\tau). \nonumber 
	\end{align}

	Obviously, the assumed inequality~(\ref{proof1}) contradicts  inequality~(\ref{proof0}). Therefore,~$T_k$ is a weakly dominant strategy, and our COPM mechanism is incentive compatible. 
\end{proof}

\section{DAPA: Online Algorithmic-based Offloading}\label{sec:algorithm}

We now describe our online algorithmic solution for our mechanism by proposing Dynamic Allocation and Pricing Algorithm (DAPA), presented in Algorithm~\ref{alg}.
When any new user~$k$ requests edge service with~$(C_k, D_k, T_k)$ at time~$\tau$, DAPA first finds all available APs and available edge servers for this  user (lines~3-4). 
    Also, DAPA has the information of all current users whose tasks are not yet completed when the new user joins (this is updated based on users leaving the system). This information is in~$\mathcal{S}({\tau}-1)$, and for each existing user~$n$  it consists of allocated AP~$n.i$, allocated edge server~$n.j$, start time~$n.st$, end time~$n.et$, complete time, 
    transmission delay, transferring delay, and computation delay.  
If the arrival time of  user~$k$ is larger than the end time  of any existing user in~$\mathcal{S}({\tau-1})$,  it indicates that these users have completed their tasks and left the system before user~$k$ joins. Thus, DAPA applies $\textsc{Update()}$ function to update the  system state at~$\tau$ by updating $\hat{u}_{i}(\tau) \gets u_{i}(\tau-1)-1$, $\hat{v}_{j}(\tau) \gets v_{j}(\tau-1)-1$,  $\mathbf{\hat{D}}_{j}(\tau) \gets \mathbf{D}_{j}(\tau-1)+D_{n}$, and other related parameters. The information of the completed user~$n$ will be removed from~$\mathcal{S}({\tau})$ (lines~7-10).

	DAPA defines a 2-D array $\mathcal{V}$ and finds  the value of instant social surplus  by calculating Eq.~(\ref{obj3}) for each feasible decision pair (lines~{11-13}). Note that here, DAPA is not solving the MISS or MISS$^2$ problem to find the optimal decision pair, but simply calculating the value of 	Eq.~(\ref{obj3})  having a decision pair~$(i,j)$.  
	   The optimal decision pair~$(i^*,j^*)$ with the minimum value is obtained from~$\mathcal{V}$ ({line~14}).

	   	DAPA uses the $\textsc{ComputeNewDelay()}$ function to check if the reported  maximum tolerable end-to-end latency  of user~$k$ can be met.  Specifically, the calculated $(i^*,j^*)$ is temporarily assigned to user~$k$ and then its total delay~$\Lambda^{l}_{k,(i,j)}(\tau)$ is computed. If~$\Lambda^{l}_{k,(i^*,j^*)}(\tau) \leq T_k$, it implies assigning~$(i^*,j^*)$ to  user~$k$ is feasible (lines~{15-18}) and the corresponding price for using this pair is calculated using Eq.~(\ref{pricing2}). Otherwise, the request of user~$k$ cannot be served  by the EC system, and it will be forwarded to the cloud ({line~20}).

	\setlength{\textfloatsep}{0.4cm}

\begin{algorithm}[t!]
	\caption{DAPA: Dynamic Allocation and Pricing Algorithm for Offloading}\label{alg}  
	\begin{algorithmic}[1]
			\STATE \textbf{{\color{black}Input}:} User~$k$ edge service request: $C_k,D_k,T_k$
			\STATE \textbf{{\color{black}Input}:} System state $\mathcal{I}(\tau-1)=(P(\tau-1),Q(\tau-1))$ 
			\STATE $H^{a} \gets \text{feasible APs for  user}\; k$
			\STATE $H^{e} \gets \text{feasible edge servers for  user}\; k$
			\STATE $\mathcal{S}({\tau-1}) \gets$ {\color{black}the set of information of current users}
			\STATE $\mathcal{V} \gets \emptyset$\quad /*2D array of instant social surplus values~(\ref{obj3})*/
			\FOR {each current user $n \in \mathcal{S}({\tau-1})$}	
				\IF {$k.st > n.et$}
					\STATE $\hat{I}(\tau) \gets \textsc{Update()}$
					\STATE $\mathcal{S}(\tau) \gets \mathcal{S}({\tau-1}) \setminus n$
				\ENDIF
			\ENDFOR
			\FOR {each  AP $i \in H^{a}$}
				\FOR {each  edge server $j \in H^{e}$}
					\STATE $\mathcal{V}[i][j] \gets$ value of Eq.~(\ref{obj3})
				\ENDFOR
			\ENDFOR
			\STATE $(i^*,j^*) \gets \arg\min (\mathcal{V})$ 
			\STATE  $\Lambda^{l}_{k,(i^*,j^*)}(\tau) \gets \textsc{ComputeNewDelay}(k,(i^*,j^*))$ based on Eq.~(\ref{delay_sys})
			\IF {$\Lambda^{l}_{k,(i^*,j^*)}(\tau) \leq T_k$}
				\STATE $w^* \gets$ payment for using $(i^*,j^*)$ based on Eq.~(\ref{pricing2})
				\STATE \textbf{return} $(i^*,j^*), w^*$
			\ELSE
				\STATE user $k$'s request is sent to the cloud
			\ENDIF			
	\end{algorithmic}
\end{algorithm}

\section{Experimental Results}\label{sec:results}

\subsection{Experimental Setup}\label{setup}
\subsubsection{EC System Data}\label{EC_data}
The simulated area is a $500 \times 500 \; \text{m}^2$ square  covered by~$8$ cloudlets,  deployed evenly in this area.  
The effective radius~$r_i$ of coverage of each AP~$i$ is randomly selected from $[75, 100, 125]$ meters in order to generate the values of the indicator variable. 
The coverage areas of cloudlets can  overlap, which indicates each arriving user may have multiple APs to connect to based on its  coordinates. 
We set the bandwidth of APs obeys Gaussian distribution with mean~$\mu=100$~Mbps and standard deviation~$\sigma=0.25\mu$.
The maximum number of users to be served simultaneously by AP~$i$ ($\mathbf{P}_i$) is
uniformly selected from~$[10,30]$.
 The edge servers are heterogeneous, and each edge server can be equipped with multiple  CPU cores. 
The computation capability of edge servers ($\mathbf{F}_j$) is uniformly selected from~$[5,10]$~GHz. 
The transferring delay between two cloudlets is uniformly distributed in~$[0.1,0.5]$ sec. 
The memory capacity~$\mathbf{D}_j$ of each edge server~$j$ is $8$~GB.

\subsubsection{User Data}\label{user_data}
The Poisson process plays an important role in modeling systems, 
as it is usually used in scenarios where the goal is to count the occurrence of certain events 
happenning  at a certain rate but completely at random~\cite{bertsekas2002introduction}. In this paper, we assume that  user arrival events can be modeled as a Poisson process with rate~$\lambda = n_{a}/3600$, where~$n_{a}=1200$ represents the number of users arriving in the EC system within one hour. 
 Each user~$k$ has a computation offloading request, and its location is arbitrary.  
 The data size~$D_k$ of user~$k$ is uniformly selected from~$[5,60]$~MB. 
 To specify the required cycles~$C_k$ of the computational task, we consider the general application type in which~$1$~bit requires~$1000$~cycles to be processed~\cite{kwak2015dream}. 
 We roughly classify the  users' tasks into three categories: 
urgent ($t_k+100~\text{sec}$), mid-urgent ($t_k+200~\text{sec}$), and nonurgent ($t_k+300~\text{sec}$), 
 where~$t_k$ is the minimum total latency for completing the computational task of user~$k$. We assume that the reported~$T_k$ from user~$k$ must be no less than~$t_k$. 
Moreover,~$\psi_k$ is $1$\$/h, 
			$\alpha_i$  is $50$\$/h, 
			and $\gamma_j$ is $50$\$/h. 

\subsection{Performance of Benchmark}\label{benchmark}
We simulate a real-time scenario with a duration of~$3$ hours. 
To evaluate the performance of our proposed mechanism, DAPA, we compare it with two other offloading strategies: 
\begin{enumerate}
		\item \textbf{User Equilibrium (UE):} every  new user selfishly chooses the decision pair with the  minimum total delay.
		\item \textbf{Random Selection (RS):} every  new user randomly chooses a feasible decision pair. 
\end{enumerate}

	We first show the performance of these mechanisms in terms of the workload on APs (Fig.~\ref{fig:result1}) and edge servers (Fig.~\ref{fig:result2}). In particular, these figures show that the dynamics of the number of users~$u_i(\tau)$ connecting to each AP~$i$ and the number of computational tasks~$v_j(\tau)$ on each edge server~$j$ over time. The results show  that DAPA achieves a more efficient allocation to users such that the workload on each AP and each edge server are balanced  overall. Note that RS should be load balanced since
it randomly selects a decision pair, however, the experienced time of users by RS is poor (Fig.~\ref{fig4-3}).

\begin{figure*}[t!]
	\captionsetup{justification=centering}
	\begin{subfigure}[b]{0.29\textwidth}
		\centering
		\includegraphics[height=4.1cm]{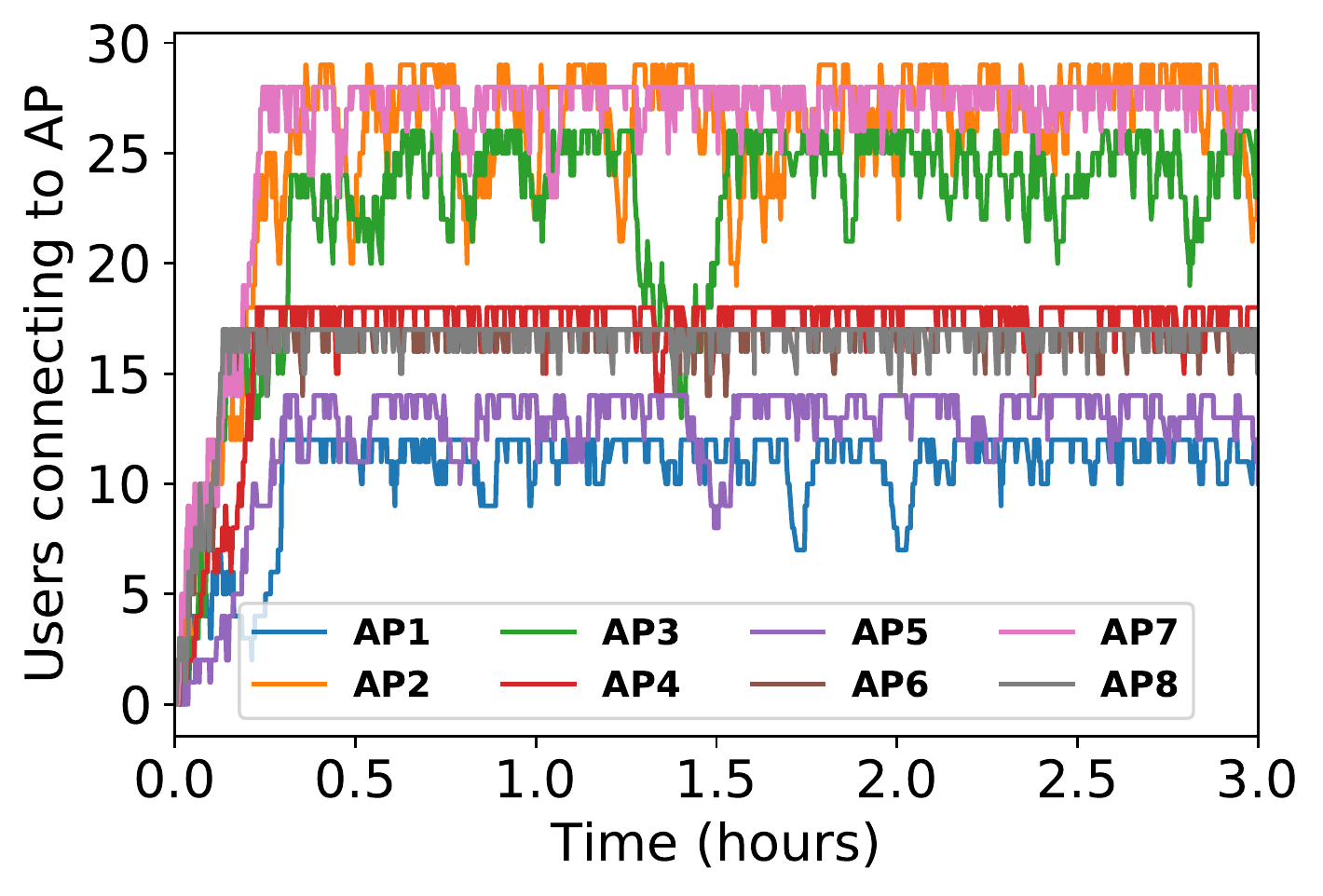}
		\caption[]{{\small User Equilibrium (UE)}} \label{fig1-1} 
	\end{subfigure}
	\qquad
	\begin{subfigure}[b]{0.29\textwidth}
		\centering
		\includegraphics[height=4.1cm]{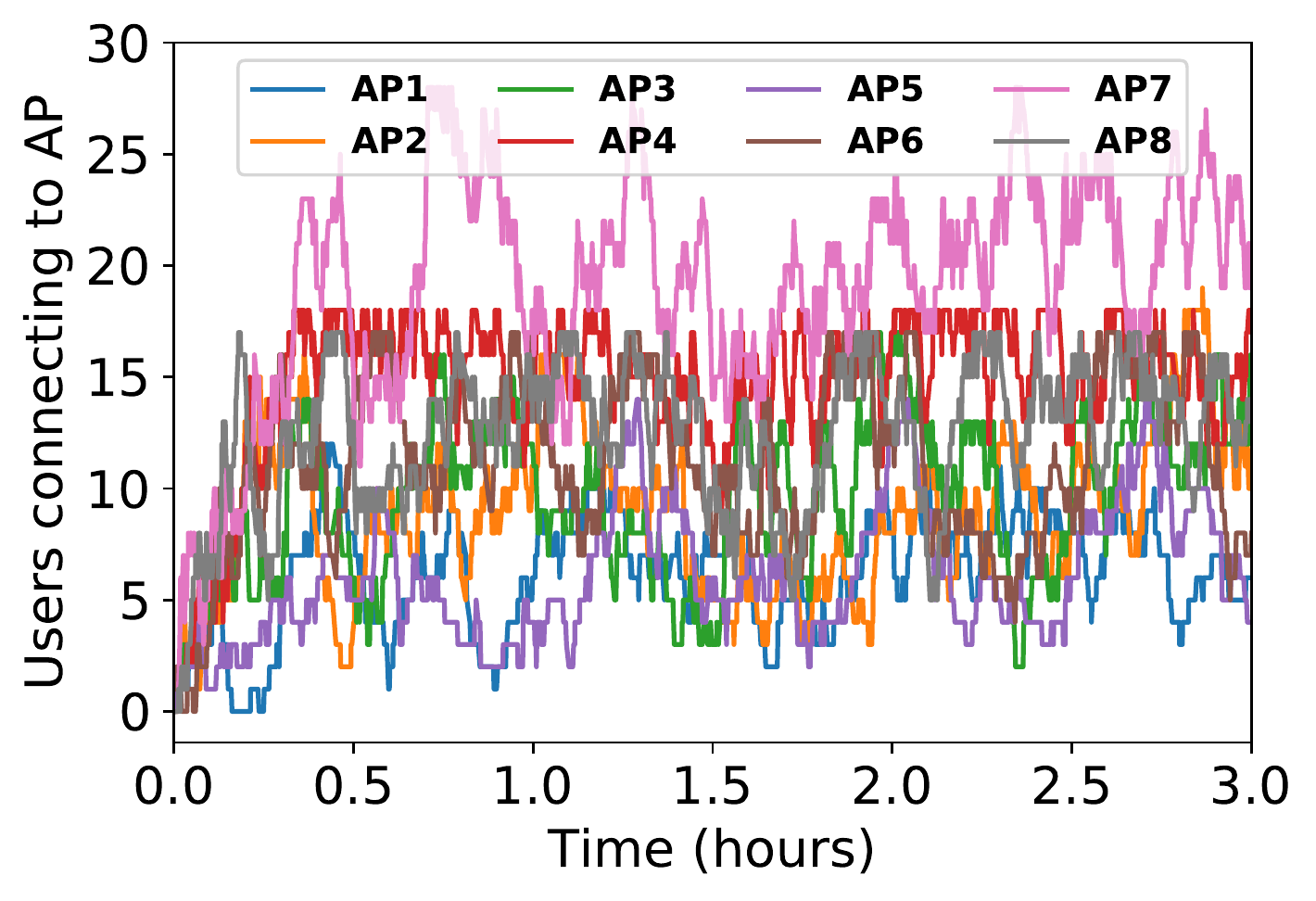}
		\caption[]{{\small  Random Selection (RS)}} \label{fig1-2}
	\end{subfigure}
	\qquad
	\begin{subfigure}[b]{0.29\textwidth}
		\centering
		\includegraphics[height=4.1cm]{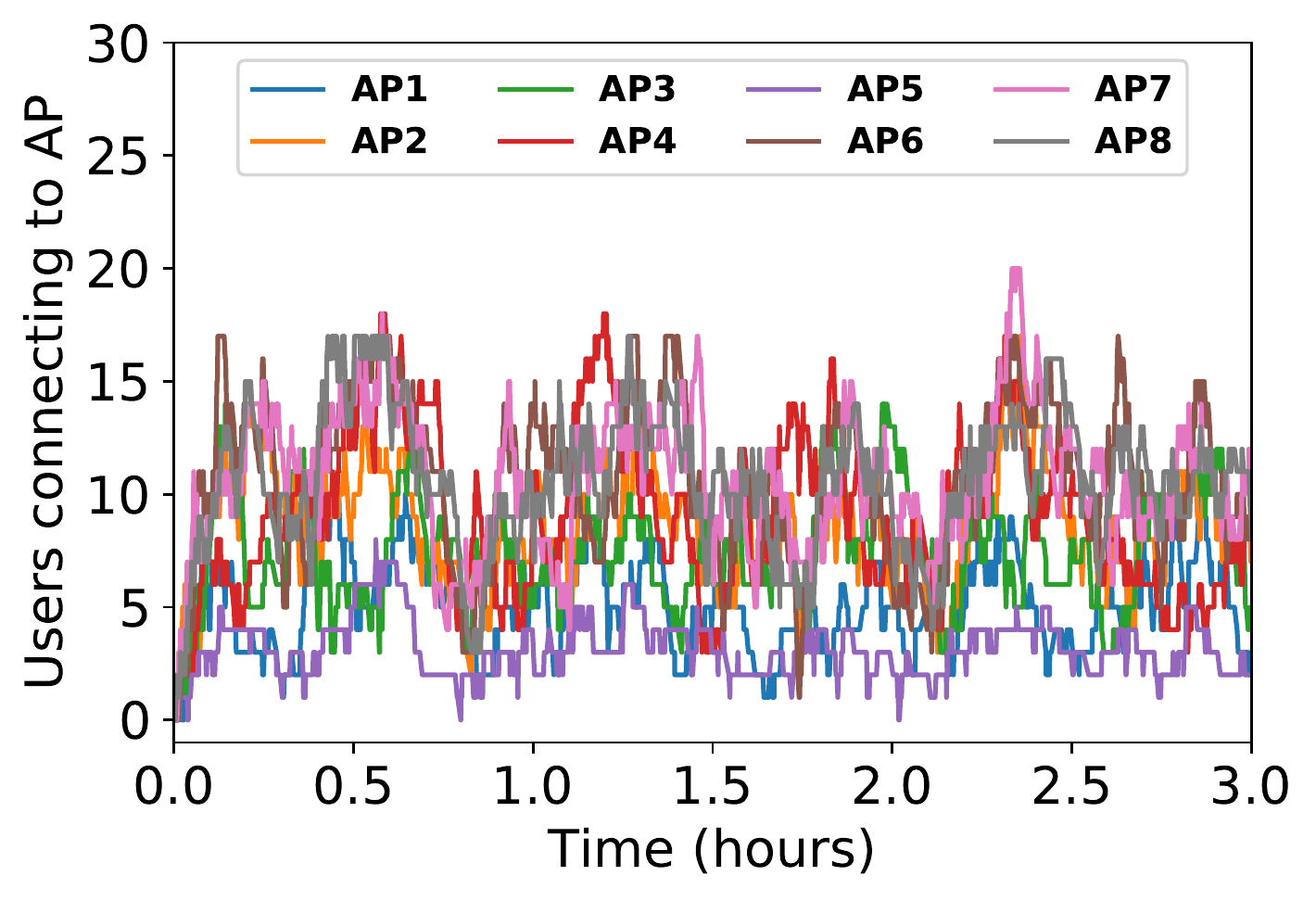}
		\caption[]{{\small DAPA}} \label{fig1-3} 
	\end{subfigure}
	\caption{Analysis of workload on APs}\label{fig:result1}
	\vspace*{-0.2cm}
\end{figure*}

\begin{figure*}[t!]
	\captionsetup{justification=centering}
	\begin{subfigure}[b]{0.29\textwidth}
		\centering
		\includegraphics[height=4.1cm]{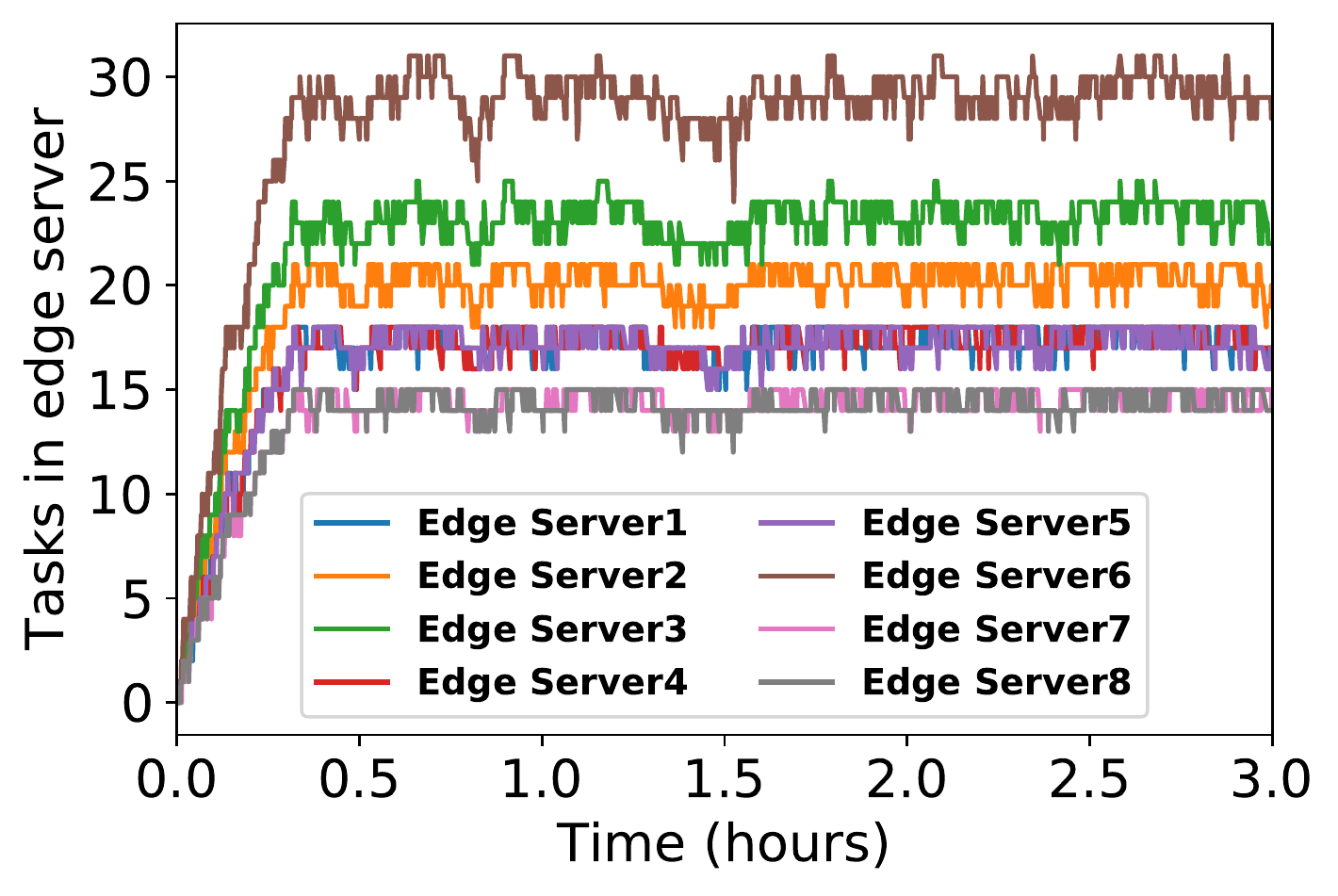}
		\caption[]{{\small  User Equilibrium (UE)}} \label{fig2-1} 
	\end{subfigure}
	\qquad
	\begin{subfigure}[b]{0.29\textwidth}
		\centering
		\includegraphics[height=4.1cm]{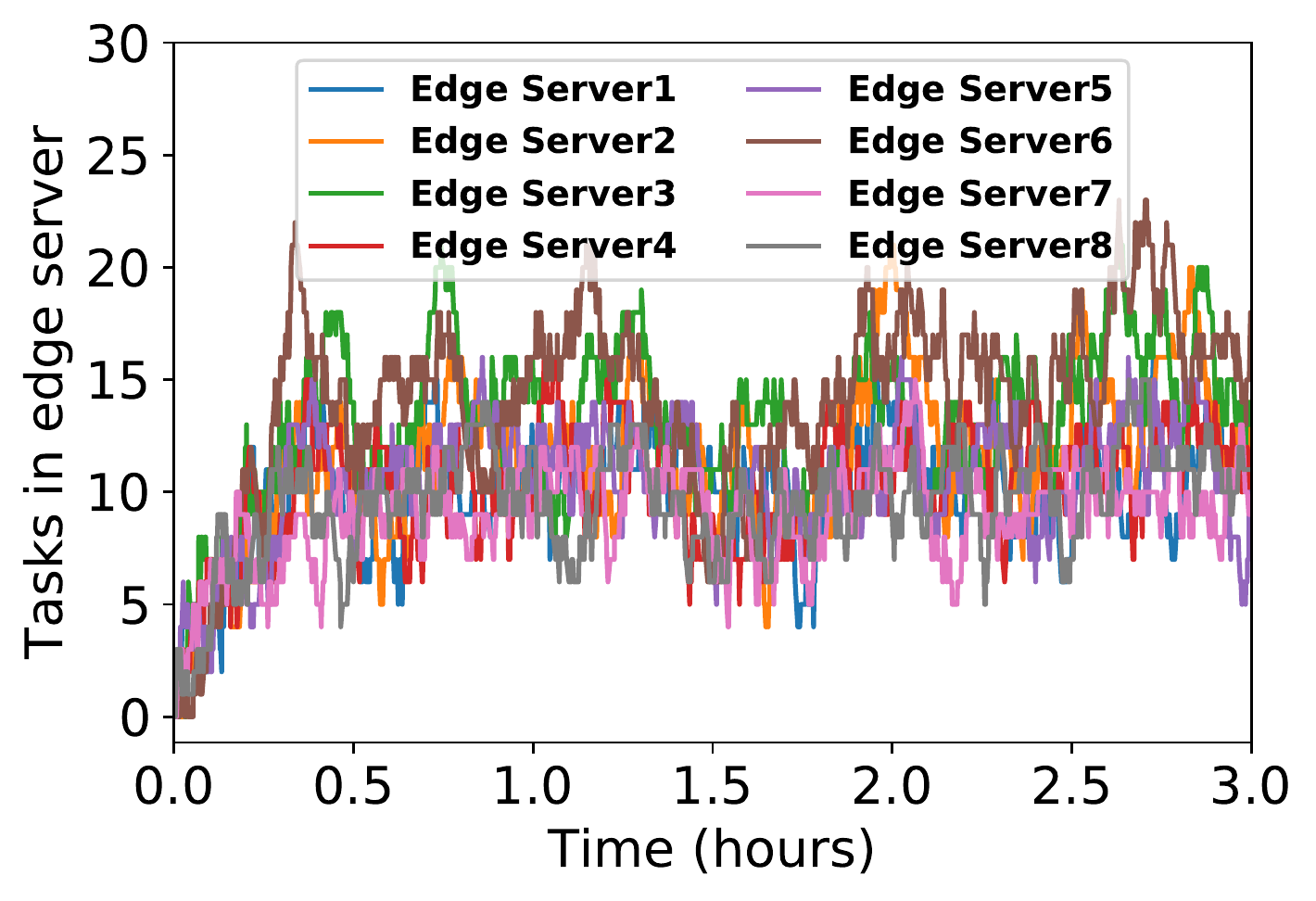}
		\caption[]{{\small  Random Selection (RS)}} \label{fig2-2}
	\end{subfigure}
	\qquad
	\begin{subfigure}[b]{0.29\textwidth}
		\centering
		\includegraphics[height=4.1cm]{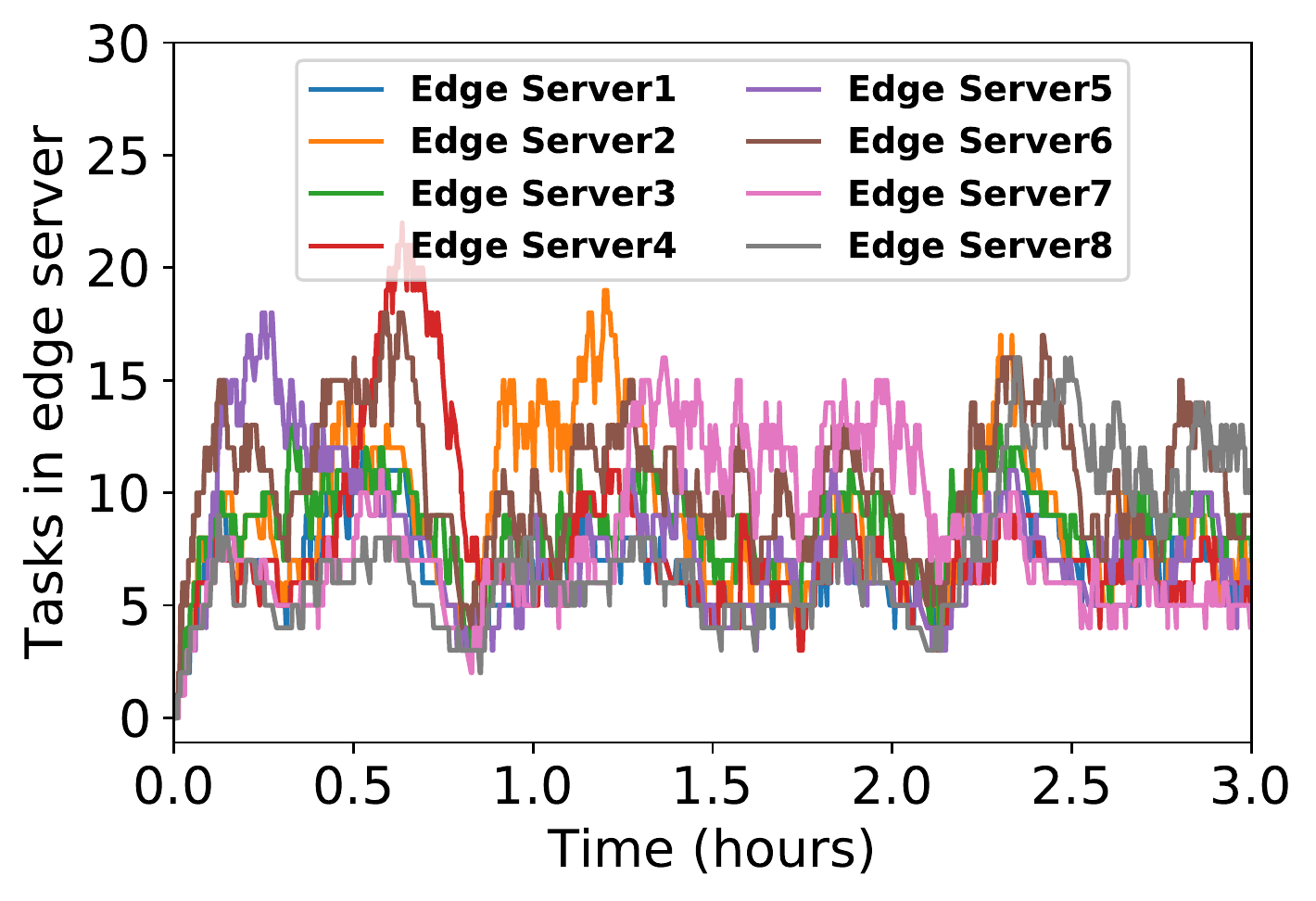}
		\caption[]{{\small DAPA}} \label{fig2-3} 
	\end{subfigure}
	\caption{Analysis of workload on edge servers}\label{fig:result2}
	\vspace*{-0.2cm}
\end{figure*}

	\begin{figure*}[t!]
	\captionsetup{justification=centering}
	\begin{subfigure}[b]{0.29\textwidth}
		\centering
		\includegraphics[height=4.2cm]{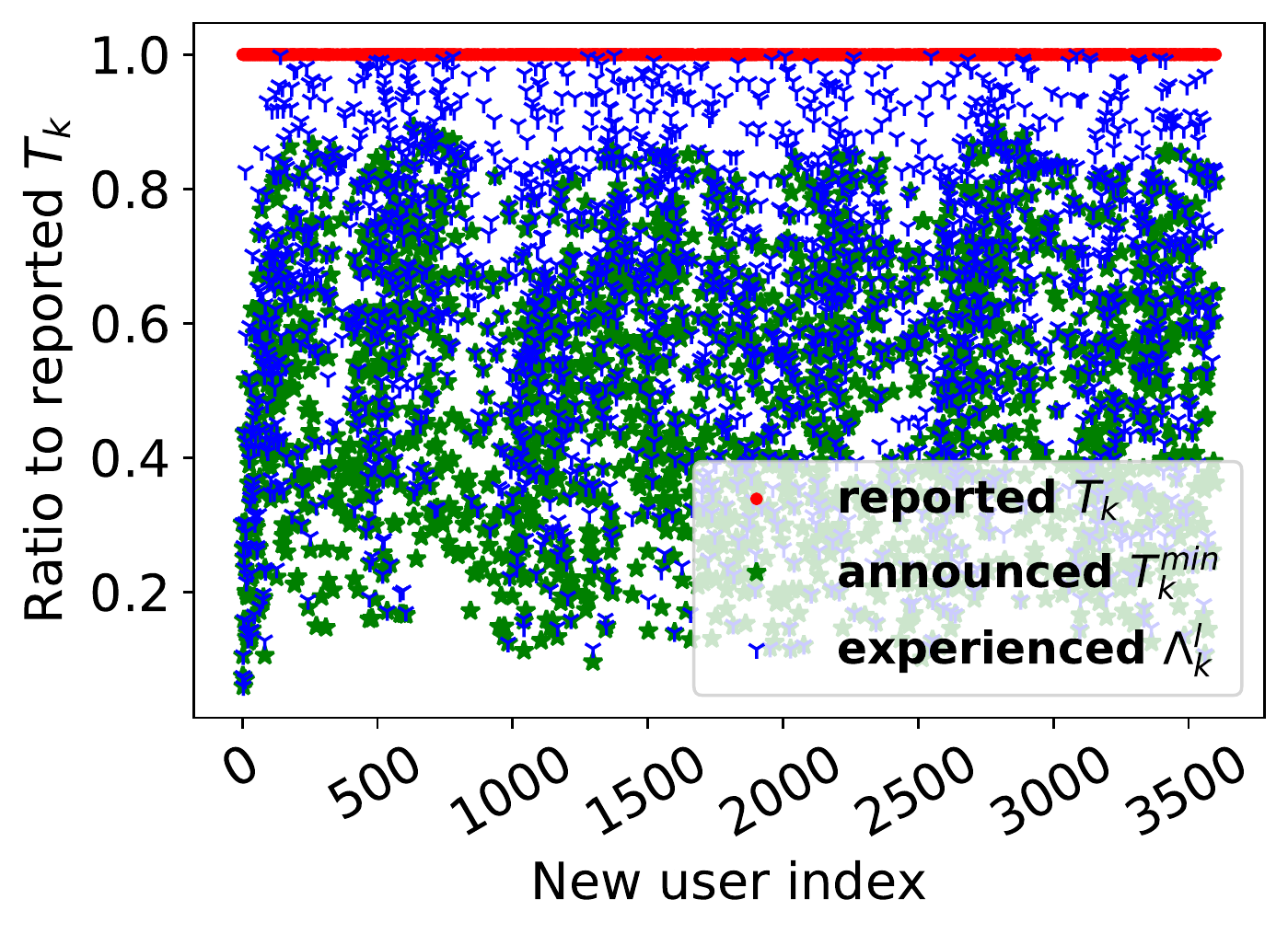}
		\caption[]{{\small  Ratio of latency to the reported~$T_k$} } \label{fig4-2}
	\end{subfigure}
 \qquad
	\begin{subfigure}[b]{0.29\textwidth}
		\centering
		\includegraphics[height=4.2cm]{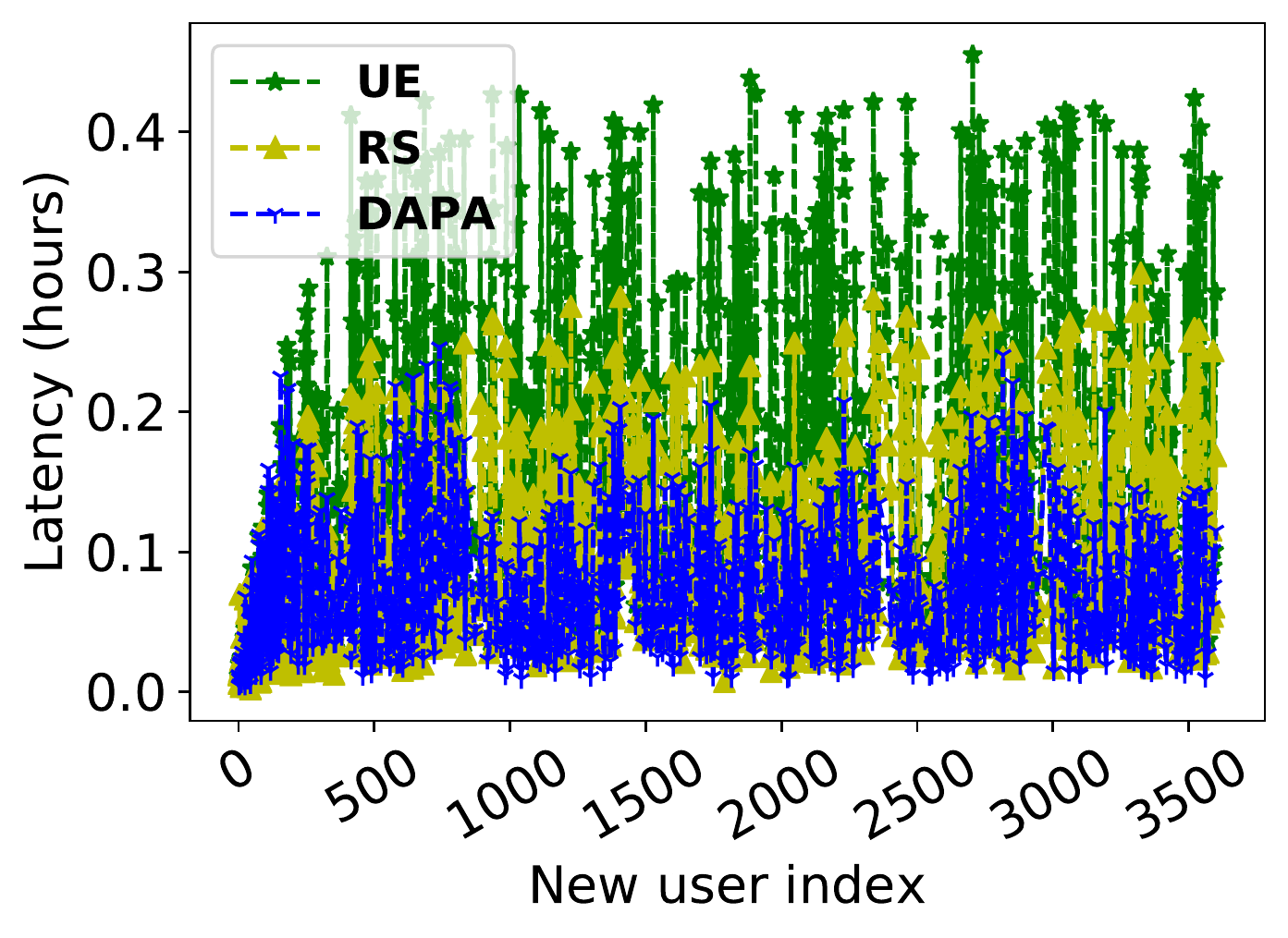}
		\caption[]{{\small End-to-end latency of users }} \label{fig4-3}  
	\end{subfigure}
		\qquad
	\begin{subfigure}[b]{0.29\textwidth}
	\centering
	\includegraphics[height=4.2cm]{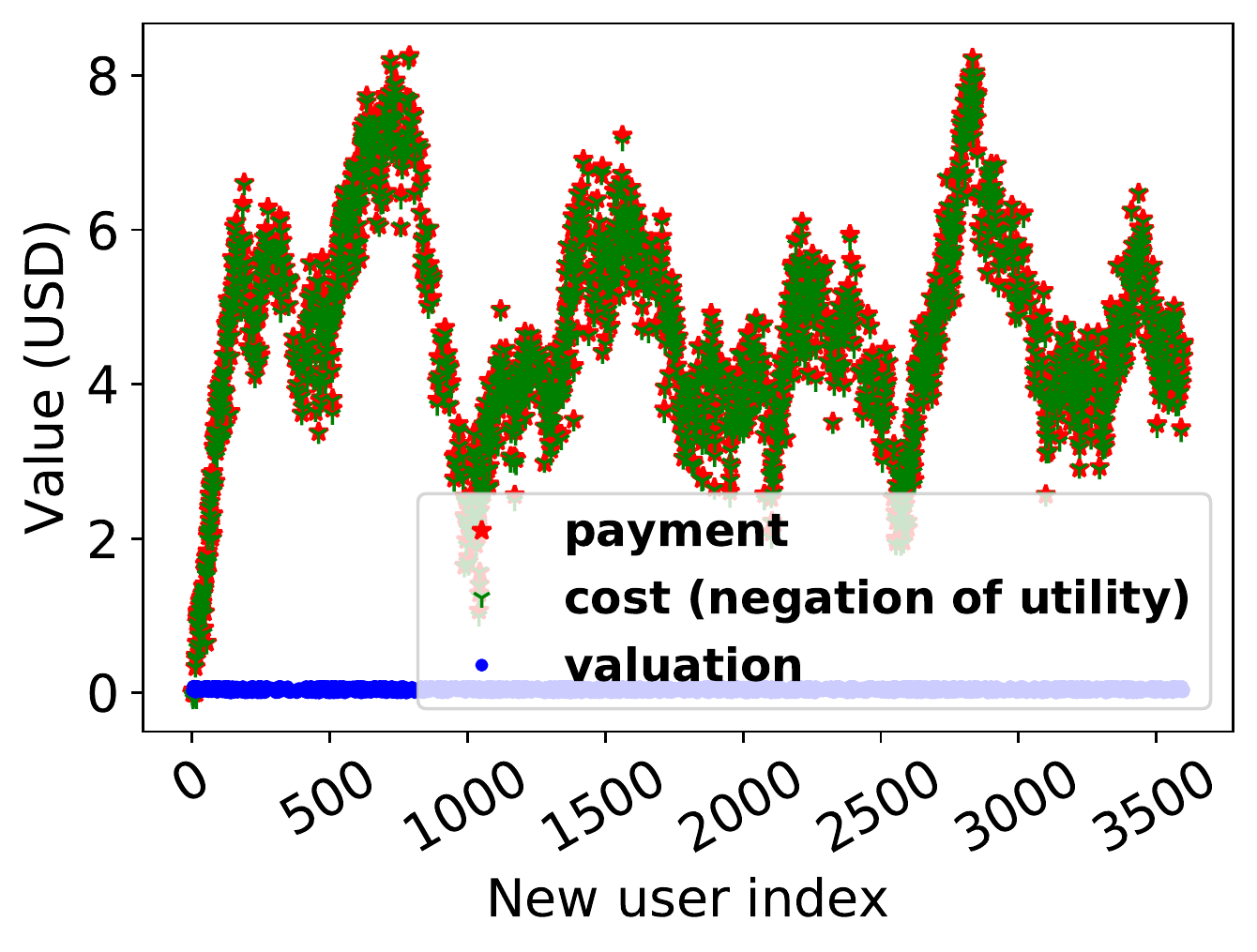}
	\caption[]{{\small  Payment, valuation, and cost of users}} \label{fig4-1} 
	\end{subfigure}
	\caption{Analysis  of pricing}\label{fig:result4} 
		\vspace*{-0.2cm}
\end{figure*} 

	We then investigate the end-to-end latency for completing the computational task of each user when following its assigned decision pair over time. We define~$T_k^{min}$ as the minimum latency that the EC system can provide for completing the computational task of user~$k$, 
that is  equal to the value
of experienced latency by choosing the UE strategy if only this user exists in the system.
	We normalize the experienced latency  of user~$k$ by DAPA ($\Lambda_k^l$) and the minimum latency ($T_k^{min}$) by dividing them by  the reported maximum tolerable latency~$T_k$ of user~$k$. These normalized values are shown in  Fig.~\ref{fig4-2}. 
The results show that the experienced end-to-end latency~$\Lambda_k^l$ of user~$k$ is different but close to the  minimum latency (i.e., $T_k^{min}$ green dots). 
This figure also shows that the users' preferences are satisfied over time since the experienced latency is always less than or equal to the reported maximum  tolerable latency. 

	Moreover, we study the dynamic changes of the experienced end-to-end latency of users over time in Fig.~\ref{fig4-3}. The results show that the proposed DAPA outperforms UE and RS in terms of the  end-to-end latency that users experience for completing their tasks as the number of joined users increases. This is due to the fact  that both  UE and RS do not have  any policy to consider new users' impacts on other existing users in the system. On the contrary, DAPA aims to find the optimal decision pair for each new user with the objective of jointly minimizing 
the sum of the increase in total delay of all
current users (excluding the new user) after the new user joins  and the total delay of the new user. 
UE leads to the worst performance as the number of arrived users increases since it considers selfish assignments and 
the EC system  rapidly becomes overloaded on APs/edge servers (as shown in Figs.~\ref{fig1-1}-\ref{fig2-1}).

	We further evaluate how the EC system makes use of the payments to incentivize each  user to report its own true maximum tolerable end-to-end latency. The payments (red points), valuations, and  costs (i.e., negative utilities) of joined users are shown in Fig.~\ref{fig4-1}. The payment of users who join the system at  the beginning is much less than 
	users who arrive later. This is due to the fact that each user payment depends on the increase in the end-to-end latency of other existing users in the system (according to Eq.~(\ref{pricing2})). 
	When there are fewer users, their payment is lower. 
	For example, Figs.~\ref{fig1-3}  and \ref{fig2-3} show a decrease in the number of users in the system at~0.6-0.8 hour,  that corresponds to about 800th-1000th joining user in Fig.~\ref{fig4-1} with a reduction in their payments. 
	Also, both of the payments and valuations of the users are always non-negative. 
		Additionally, when all users report their maximum tolerable end-to-end latency truthfully, their  costs are minimized (i.e., utilities are maximized) at the equilibrium obtained by DAPA. 

\section{Conclusion}\label{sec:conclusions}
	In this paper, we studied the dynamic computation offloading problem in the EC system. We  formulated the computation offloading optimization problem for   users joining and leaving the system with the objective of jointly optimizing the access point allocation  and service placement problems. To address this challenge, we devised an online incentive-compatible mechanism, DAPA, in which the new  users always declare their true preferences. 
	The effectiveness of the mechanism was  validated by extensive experiments in comparison to User Equilibrium  and Random Selection strategies. 
	For the future work, we plan to consider the effects of user mobility on the computation offloading problem in edge computing.

\vspace*{0.2cm}
\noindent \emph{Acknowledgment.} This research was supported in part by NSF grant CNS-1755913.

{\small
	\newcommand{\BIBdecl}{\setlength{\itemsep}{0.2 em}}
	\bibliographystyle{IEEEtran}
	\bibliography{reference}
}

\appendix
\begin{equation}\label{obj2--}
	\begin{aligned}
		&
		-V_{s,(i,j)}(\tau)+V_{k,T_k}^{(i,j)}(\tau) \\
		&	=-\Big\{\sum_{i \in \mathcal{M}}^{}\sum_{n \in \mathcal{N}  \setminus k} \alpha_{i}\hat{\Lambda}^{t}_{n,i}(\tau)+
		\sum_{i \in \mathcal{M}}^{}\sum_{j \in \mathcal{M}}^{}2\beta_{ij}x_{ij}(\tau-1) \hat{\Lambda}^f_{(i,j)}\\
		&\;\;\;\;\;+
		\sum_{j \in \mathcal{M}}^{}\sum_{n \in \mathcal{N}  \setminus k} \gamma_j \hat{\Lambda}^{c}_{n,j}(\tau) \Big\} + 
		\psi_k \Big(T_k-\Lambda^{l}_{k,(i,j)}(\tau) \Big) \\
		&	=-\Big\{\Big( \sum_{i \in \mathcal{M} \setminus i^*}^{}\sum_{n \in \mathcal{N}  \setminus k} \alpha_{i}\Lambda^{t}_{n,i}(\tau-1)+ \sum_{n \in \mathcal{N}  \setminus k} \alpha_{i^*}\hat{\Lambda}^{t}_{n,i^*}(\tau) \Big)+\\
		&\;\;\;\;\;2\Big(\sum_{i \in \mathcal{M} \setminus i^*}^{}\sum_{j \in \mathcal{M} \setminus j^*}^{}\beta_{ij}x_{ij}(\tau-1) \Lambda^f_{(i,j)} + \beta_{i^*j^*} x_{i^*j^*}(\tau-1)\hat{\Lambda}^f_{(i^*,j^*)}  \Big) \\	
		&\;\;\;\;\;  +\Big(\sum_{j \in \mathcal{M} \setminus j^*}^{}\sum_{n \in \mathcal{N}  \setminus k} \gamma_j \Lambda^{c}_{n,j}(\tau-1)+   \sum_{n \in \mathcal{N}  \setminus k} \gamma_{j^*} \hat{\Lambda}^{c}_{n,j^*}(\tau)  \Big) 	\Big\} \\
		&\;\;\;\;\; +\psi_k \Big( T_k - (\hat{\Lambda}^t_{k,i^*}(\tau) + 2\hat{\Lambda}^f_{(i^*,j^*)} +\hat{\Lambda}^c_{k,j^*}(\tau) )  \Big) \\
		&	 =- \Big\{\sum_{i \in \mathcal{M} \setminus i^*}^{}\sum_{n \in \mathcal{N}  \setminus k} \alpha_{i}\Lambda^{t}_{n,i}(\tau-1)  + \\
		&\;\;\;\;\;
		\sum_{i \in \mathcal{M} \setminus i^*}^{}\sum_{j \in \mathcal{M} \setminus j^*}^{}2\beta_{ij}x_{ij}(\tau-1) \Lambda^f_{(i,j)}  + 	\sum_{j \in \mathcal{M} \setminus j^*}^{}\sum_{n \in \mathcal{N}  \setminus k} \gamma_j \Lambda^{c}_{n,j}(\tau-1)   \Big\}
		\\&\;\;\;\;\; - 
		\Big\{ \Big(\sum_{n \in \mathcal{N}  \setminus k}\alpha_{i^*}\hat{\Lambda}^t_{n,i^*}(\tau)+\psi_k \hat{\Lambda}^t_{k,i^*}(\tau)\Big) 
		\\&\;\;\;\;\;+ \Big(2\beta_{i^*j^*} x_{i^*j^*}(\tau-1)\hat{\Lambda}^f_{(i^*,j^*)} +2\psi_k\hat{\Lambda}^f_{(i^*,j^*)}\Big)  \\&\;\;\;\;\; +
		(\sum_{n \in \mathcal{N}  \setminus k}\gamma_{j^*}\hat{\Lambda}^c_{n,j^*}(\tau)+\psi_k\hat{\Lambda}^c_{k,j^*}(\tau)) \Big\} + \psi_k T_k \\&
		=  
		- \Big\{    \Big(\sum_{i \in \mathcal{M}}\sum_{n \in \mathcal{N}  \setminus k}\alpha_i\Lambda^t_{n,i}(\tau-1)-\sum_{n \in \mathcal{N}  \setminus k}\alpha_{i^*}\Lambda^t_{n,i^*}(\tau-1)\Big) 
		\\&\;\;\;\;\;+\Big(\sum_{i \in \mathcal{M}}\sum_{j \in \mathcal{M}}2\beta_{ij}x_{ij}(\tau-1)\Lambda^f_{(i,j)}  - 2\beta_{i^*j^*}x_{i^*j^*}(\tau-1)\Lambda^f_{(i^*,j^*)}\Big)  \\&\;\;\;\;\;  +
		\Big(\sum_{j \in \mathcal{M}}\sum_{n \in \mathcal{N}  \setminus k}\gamma_j\Lambda^c_{n,j}(\tau-1)-\sum_{n \in \mathcal{N}  \setminus k}\gamma_{j^*}\Lambda^c_{n,j^*}(\tau-1)\Big)
		\Big\}  
		\\&\;\;\;\;\;-
		\Big\{ 
		\Big(\sum_{n \in \mathcal{N}  \setminus k}\alpha_{i^*}\hat{\Lambda}^t_{n,i^*}(\tau)+\psi_k\hat{\Lambda}^t_{k,i^*}(\tau)\Big)  \\&\;\;\;\;\; +
		2\Big(\beta_{i^*j^*}x_{i^*j^*}(\tau-1)\hat{\Lambda}^f_{(i^*,j^*)} +\psi_k\hat{\Lambda}^f_{(i^*,j^*)}\Big) 
		\\&\;\;\;\;\;+ \Big(\sum_{n \in \mathcal{N}  \setminus k}\gamma_{j^*}\hat{\Lambda}^c_{n,j^*}(\tau)+\psi_k\hat{\Lambda}^c_{k,j^*}(\tau)\Big)
		\Big\} 
		+ \psi_k T_k
	\end{aligned} 
\end{equation}

\end{document}